\newtheorem{proposition}{Proposition}
\newcommand{\p}{\ensuremath{\mathcal{P}}}
\newcommand{\np}{\ensuremath{\mathcal{N\!P}}}
\newcommand{\fmax}{\ensuremath{F_\text{max}}{}}
\newcommand{\smax}{\ensuremath{S_\text{max}}{}}
\newcommand{\csum}{\ensuremath{\sum C_i}{}}
\newcommand{\ssum}{\ensuremath{\sum S_i}{}}
\newcommand{\cwsum}{\ensuremath{\sum w_iC_i}{}}
\DeclareMathOperator*{\argmin}{arg\,min}
\begin{document}

\title{Data-driven scheduling in serverless computing\\ to reduce response time
\thanks{Preprint of the paper accepted at the 21th IEEE/ACM International Symposium on Cluster, Cloud and Internet Computing (CCGrid 2021), Melbourne, Australia, 2021}
}

\author{
    \IEEEauthorblockN{%
        Bartłomiej Przybylski\IEEEauthorrefmark{1}, Paweł Żuk\IEEEauthorrefmark{2}, Krzysztof Rzadca, \emph{Member, IEEE}\IEEEauthorrefmark{3}}
    \IEEEauthorblockA{%
        \emph{Institute of Informatics, University of Warsaw}\\
        Warsaw, Poland\\
        Email: \IEEEauthorrefmark{1}bap@mimuw.edu.pl, \IEEEauthorrefmark{2}p.zuk@mimuw.edu.pl, \IEEEauthorrefmark{3}krzadca@mimuw.edu.pl}
}

\maketitle

\begin{abstract}
In Function as a Service (FaaS), a serverless computing variant, customers deploy functions instead of complete virtual machines or Linux containers. It is the cloud provider who maintains the runtime environment for these functions. FaaS products are offered by all major cloud providers (e.g. Amazon Lambda, Google Cloud Functions, Azure Functions); as well as standalone open-source software (e.g. Apache OpenWhisk) with their commercial variants (e.g. Adobe I/O Runtime or IBM Cloud Functions).
We take the bottom-up perspective of a single node in a FaaS cluster. We assume that all the execution environments for a set of functions assigned to this node have been already installed. Our goal is to schedule individual invocations of functions, passed by a load balancer, to minimize performance metrics related to response time. Deployed functions are usually executed repeatedly in response to multiple invocations made by end-users. Thus, our scheduling decisions are based on the information gathered locally: the recorded call frequencies and execution times.
We propose a number of heuristics, and we also adapt some theoretically-grounded ones like SEPT or SERPT. Our simulations use a recently-published Azure Functions Trace. We show that, compared to the baseline FIFO or round-robin, our data-driven scheduling decisions significantly improve the performance.
\end{abstract}

\begin{IEEEkeywords}
scheduling,
function as a service,
FaaS,
serverless,
data center,
cloud,
latency,
response time,
flow time,
stretch
\end{IEEEkeywords}

\section{Introduction}

Serverless computing \cite{Castro2019} allows cloud customers to execute their code without configuring and maintaining a production environment or a software infrastructure stack. Major cloud offer serverless products, e.g. Amazon Lambda, Google Cloud Functions, and Microsoft Azure Serverless. In this paper, we focus on a variant of serverless computing called \emph{Function as a Service (FaaS)} \cite{Fox2017}. In FaaS, a \emph{cloud customer} develops a source code of a stateless function and then uploads it to the cloud provider. When a function is invoked by an \emph{end-user}, this \emph{invocation} is processed on the infrastructure managed by the provider.

We consider a set of functions that have been already loaded into memory of a single node in a large cluster. We intentionally omit function-to-node assignment to show that the performance of the whole cluster can be improved on the node-level too. Such an improvement is orthogonal to improvements in function placement~\cite{kaffes2019centralized,suresh2020ensure}, load-balancing~\cite{suresh2020ensure} or auto-scaling of the clusters~\cite{Perez2019}.

In FaaS, each function can be invoked numerous times, e.g. in response to repeated HTTP requests coming from various end-users. Thus, the local (node) scheduler can make online decisions on how to assign these invocations to available CPU cores based on invocations from the past. The information used may include, among others, the frequency of invocations and their observed past execution times. For this reason, theoretical lower bounds for competitiveness of online strategies such as SPT or SRPT (see, e.g., \cite{Bender1998,Muthukrishnan1999,Megow2004}) can be too conservative in case of the analysed problem (formally introduced in Sec.~\ref{sec:pd}).

These local scheduling decisions can be made implicitly by the kernel scheduler (at the operating system level). However, the kernel has a low-level perspective on the scheduling problem. In particular, the kernel is not aware of how individual FaaS invocations map to processes (threads), so, in the context of FaaS, it cannot make dynamic, function-related decisions. As a consequence, the kernel is forced to use variants of the round-robin approach, where individual invocations are processed in turns and thus repeatedly preempted. We show that the overall performance of the system can be increased by implementing other reasonable heuristics on the local scheduler level with no significant computational or memory cost---as long as we can use the information about how the individual invocations link to the functions.

We state that in practice one rarely deals with extreme generality (as in the theoretical, worst-case results), and that better decisions can be made by taking into account information readily available on a local FaaS node. This is so independently of whether information is known a priori or it is guessed (predicted) based on historical data. We validate this claim with computational experiments using the real-life data recently published as the Azure Functions Trace \cite{Shahrad2020}.

The contributions of this paper are as follows:
\begin{itemize}
    \item We define a theoretical model of node-level scheduling for FaaS (Sec.~\ref{sec:pd}). We adapt the real-life data from the Azure Function Trace to reflect our model (Sec.~\ref{sec:azure-dataset}--\ref{sec:model}).
    \item We propose a number of theoretically-grounded heuristics and a new one, \emph{Fair Choice}, that can be used by the local scheduler to make decisions online. We also show that these heuristics can be implemented without significant increase of auxiliary computations (Sec.~\ref{sec:st}).
    \item By simulations, we show that applying heuristics based on past information leads to reduction in latency-related objectives, compared to the preemptive round-robin (corresponding to standard scheduling used by the operating system, Sec.~\ref{sec:ev}). 
\end{itemize}

\section{Problem description}
\label{sec:pd}

In this section, we define the optimization problem of minimizing latency-related objectives in the FaaS environment. The aim of this problem is to be simply-defined, yet realistic enough to address dilemmas encountered in the serverless practice. Our notation follows the standard of Brucker~\cite{Brucker2007}.

We consider a single physical machine with $m$ parallel processors/cores $P_1, P_2, \dots, P_m$ (a processor is a standard scheduling term; our processor maps to a single core on the machine). This machine has been assigned a set of $n$ stateless functions, $f_1, f_2, \dots, f_n$, that can be invoked multiple times, without a significant startup time (i.e. they are already loaded into memory). The functions are stateless, so one processor can execute one function at a time, but at any moment invocations of the same function can be independently processed on different processors. Each invocation (call) corresponds to a single end-user request. The actual execution time of $f_j$ differs between calls, and we model it as a random variable with the $\mathbb{P}_{j}$ distribution. We consider both a preemptive and a non-preemptive case. In the preemptive case, a process executing a function can be suspended by the operating system, and later restored on the same or on another processor. In the non-preemptive case, a process executing a function---once started---occupies the processor until the function finishes and the result is ready to be returned to the end-user.

We assume that the $f_1, f_2, \dots, f_n$ functions, assigned to the machine, have been selected by an external load balancer/supervisor. In the case of cloud clusters, where thousands of physical machines work simultaneously, this process may take into account complex placement policies (balancing the overall load, affinity to reduce cluster network load, anti-affinity to increase reliability by placing instances executing the same function on different nodes or racks). In this paper, we focus on the micro-scale of a single node in such a cluster.

We consider a time frame of $[0, T)$ where $T$ is a positive integer. Each function $f_j$ can be executed multiple times in response to numerous calls incoming in this time frame. Thus, the instance of the considered problem can be described as a sequence of an unknown number of invocations (calls) in time. Let the $i$-th call be represented by a pair of values: the moment of call $r(i)$ and a reference $f(i)$ to the invoked function. Of course, $0 \leq r(i - 1) \leq r(i) < T$ for all $i > 1$.

Once the $i$-th invocation is finished (e.g. the result is returned to the end-user), we know the moment of its completion $c(i)$, and the total processing time $p(i)$ that the invocation required. Note that it is possible that $c(i) \geq T$. We use two base metrics to measure the performance of handling a call. The flow time $F(i)$, i.e. $c(i) - r(i)$, corresponds to the server-side processing delay of the query. The stretch (also called the slowdown) $S(i)$, i.e. $F(i)/p(i)$, weights the flow time by the processing time.

For any instance of our problem, processes executing functions need to be continuously assigned to processors in response to incoming calls. We determine the quality of the obtained schedule based on the following performance metrics that aggregate flow times or stretches across all the calls.

\begin{itemize}
    \item \textbf{Average flow time} (AF), $\sum_i F(i)/\#\{i\},$ where $\#\{i\}$ is a total number of invocations. It is a standard performance metric considered for over four decades in various industrial applications \cite{Baker1974introduction}. It corresponds to the average response time.
    \item \textbf{Average stretch} (AS), $\sum_i S(i)/\#\{i\},$ which takes into account the actual execution time of a call \cite{Bender1998}, and thus responds to the observation that it is less noticeable that a 2-second call is delayed by 40 milliseconds (with the stretch of $2.04/2 = 1.02$) than it would be in case of a 10-millisecond call (resulting in the stretch of $50/10 = 5$).
    \item \textbf{99th percentile of flow time} (F99), $x \colon \mathbb{P}(F(i) < x) = 0.99, \text{and}$
    \item \textbf{99th percentile of stretch} (S99), $x \colon \mathbb{P}\left(S(i) < x\right) = 0.99,$ which are less fragile variants of the maximum performance metrics \cite{Bender1998}. We state that maximum-defined metrics are not good indicators of the performance in FaaS, as if the flow time or stretch of a call exceeds a perceptual threshold accepted by the end-user, the call is cancelled and the function is called again (e.g. by refreshing a webpage). This perceptual threshold reduces the number of calls with an unacceptably high flow time that have a significant impact on the overall performance \cite{Sloss2019}. Our robust variants, measuring the $99$th percentile, return a value $x$ such that 99\% of all invocations have stretch (or flow time) smaller than $x$.
    \item \textbf{Average function-aggregated flow time} (FF), $$\frac{1}{n}\sum_{j = 1}^{n} \frac{\sum_{\{i \colon f(i) = j\}}F(i)}{\# \{i \colon f(i) = j\}},$$ where $\# \{i \colon f(i) = j\}$ is the number of all $f_j$ calls, and
    \item \textbf{Average function-aggregated stretch} (FS), $$\frac{1}{n}\sum_{j=1}^{n} \frac{\sum_{\{i \colon f(i) = j\}}F(i)}{\sum_{\{i \colon f(i) = j\}} p(i)},$$ which we propose as new metrics specific for the FaaS environment. Our aim is to measure the fairness of a schedule, based on the average values of the flow time or stretch within the sets of invocations of the same functions. These metrics take into account that functions developed by different users may require different amounts of resources (i.e. time), and that the performance of an invocation should not depend significantly on the set of functions that share the same machine.
\end{itemize}

As the actual processing times of invocations are random variables, we solve a set of online stochastic scheduling problems. Using the extended three-field notation \cite{Brucker2007}, we denote these problems as $\text{P}m|\text{on-line}, r(i), p(i) \sim \mathbb{P}_{f(i)}|\mathbb{E}[\sigma]$ and $\text{P}m|\text{on-line}, \text{pmtn}, r(i), p(i) \sim \mathbb{P}_{f(i)}|\mathbb{E}[\sigma]$ where $\sigma \in \{\text{AF, AS, F99, S99, FF, FS}\}$.

\begin{proposition}
The $\text{P}m|\text{pmtn}, r(i)|\text{FF}$ and $1|r(i)|\text{FF}$ problems are strongly \np-Hard. The $1|r(i)|\text{FS}$ problem is \np-Hard.
\end{proposition}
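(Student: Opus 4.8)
The plan is to exploit the fact that both function-aggregated objectives degenerate into classical per-call objectives on instances in which every function is invoked exactly once. If $\#\{i : f(i) = j\} = 1$ for all $j$, then $n$ equals the number of calls, $\text{FF} = \frac{1}{n}\sum_i F(i)$ is a positive constant times the total flow time, and $\text{FS} = \frac{1}{n}\sum_i F(i)/p(i) = \frac{1}{n}\sum_i S(i)$ is a positive constant times the total stretch. Because $\sum_i F(i) = \sum_i c(i) - \sum_i r(i)$ and the release times $r(i)$ are fixed by the input, minimizing FF on such instances is equivalent to minimizing $\csum$, and minimizing FS is equivalent to minimizing $\ssum$. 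Each of the three claims then follows by embedding a known-hard per-call problem as the special case of one call per function, so that the hard problem becomes a restriction of ours.

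First I would settle the single-machine claims. For $1|r(i)|\text{FF}$ I would reduce from $1|r(i)|\csum$, minimizing total completion time with release times, which is strongly \np-Hard (a classical result tabulated in \cite{Brucker2007}). Given such an instance, I create one function per job, invoked once at its release time with the given processing time; by the observation above an optimal FF schedule is an optimal $\csum$ schedule and conversely. The transformation changes no numeric data, so it maps polynomially bounded instances to polynomially bounded ones and strong \np-Hardness is preserved. For $1|r(i)|\text{FS}$ the same one-call-per-function gadget reduces from $1|r(i)|\ssum$, total stretch, equivalently weighted flow time with weights $1/p(i)$; this problem is \np-Hard, and the fact that its natural reduction is of PARTITION (binary) type is precisely why the present claim is ordinary rather than strong \np-Hardness.

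For the preemptive parallel claim $\text{P}m|\text{pmtn}, r(i)|\text{FF}$ I would apply exactly the same gadget, reducing from $\text{P}m|\text{pmtn}, r(i)|\csum$---total completion time on parallel processors with preemption and release times---which is strongly \np-Hard. Since the gadget again leaves all data unchanged, strong \np-Hardness transfers to FF.

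The main obstacle is not the reductions, which are essentially the identity map, but certifying the base problems at the correct hardness level. The single-machine $\csum$ result is standard, but for the parallel case I must confirm that $\text{P}m|\text{pmtn}, r(i)|\csum$ is strongly \np-Hard and not merely \np-Hard; this is delicate because preemption can trivialize a problem (for instance SRPT solves $1|\text{pmtn}, r(i)|\csum$ in polynomial time), so I would either invoke the established strong-hardness classification for the preemptive parallel case or, if a fixed number of processors is intended by $\text{P}m$, supply a direct reduction from 3-PARTITION. Likewise I must justify that total stretch $1|r(i)|\ssum$ is \np-Hard, reconstructing or citing its PARTITION reduction. Finally I would verify once more that the $-\sum_i r(i)$ shift and the $1/n$ scaling relating $\csum$ (resp. $\ssum$) to FF (resp. FS) leave instance sizes intact, so that strong \np-Hardness genuinely survives the reductions where it is claimed.
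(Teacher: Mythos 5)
Your proposal is correct and follows essentially the same route as the paper: restrict to instances where each function is invoked exactly once, observe that FF then reduces (up to the fixed shift $\sum_i r(i)$ and the $1/n$ scaling) to \csum{} and FS to \ssum{}, and import the known hardness of $1|r(i)|\csum$, $\text{P}m|\text{pmtn}, r(i)|\csum$, and $1|r(i)|\ssum$. The paper simply cites the three base results (Lenstra--Rinnooy Kan for the single-machine case, Bellenguez et al.\ for the preemptive parallel case, and Legrand et al.\ for total stretch) rather than reproving them, which resolves the one point you flagged as needing verification.
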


\begin{proof}
Consider special cases of the above problems, in which each function is invoked exactly once. Then, the FF metric becomes equivalent to the $\sum_i c(i)$ metric. The $\text{P}m|\text{pmtn}, r(i)|\sum_i c(i)$ \cite{Bellenguez2015} and $1|r(i)|\sum_i c(i)$ \cite{LenstraRinnooy-Kan1977} problems are strongly \np-Hard. Similarly, the FS metric becomes equivalent to the $\sum_i S(i)$ metric. It is known that the $1|r(i)|\sum_i S(i)$ problem is \np-Hard \cite{Legrand2006}.
\end{proof}

\section{Measuring invocations in the Azure dataset}
\label{sec:azure-dataset}

The recently published Azure Function Trace~\cite{Shahrad2020} provides information about function invocations collected over a continuous 14-day period between July 15th and July 28th, 2019.
For each day within this period, the trace presents a number of invocations of each of the monitored functions during each minute of the day ($24 \cdot 60 = 1440$ separate measurements).
We denote the number of invocations of the $f_j$ function within the $k$-th minute of the trace as $\lambda^k_j$.
The trace distinguishes between different invocation sources, e.g. incoming HTTP requests or periodic executions (\texttt{cron} tasks). In this paper, we consider HTTP requests only, as they are less predictable.
Additionally, the trace shows the distribution of the execution times for each function during each day (based on weighted averages from 30-second intervals).
For each function, the trace provides values of the $0$th, $1$st, $25$th, $50$th, $75$th, $99$th and the $100$th percentile of this approximate distribution.

\begin{figure}
    \includegraphics[width=\columnwidth]{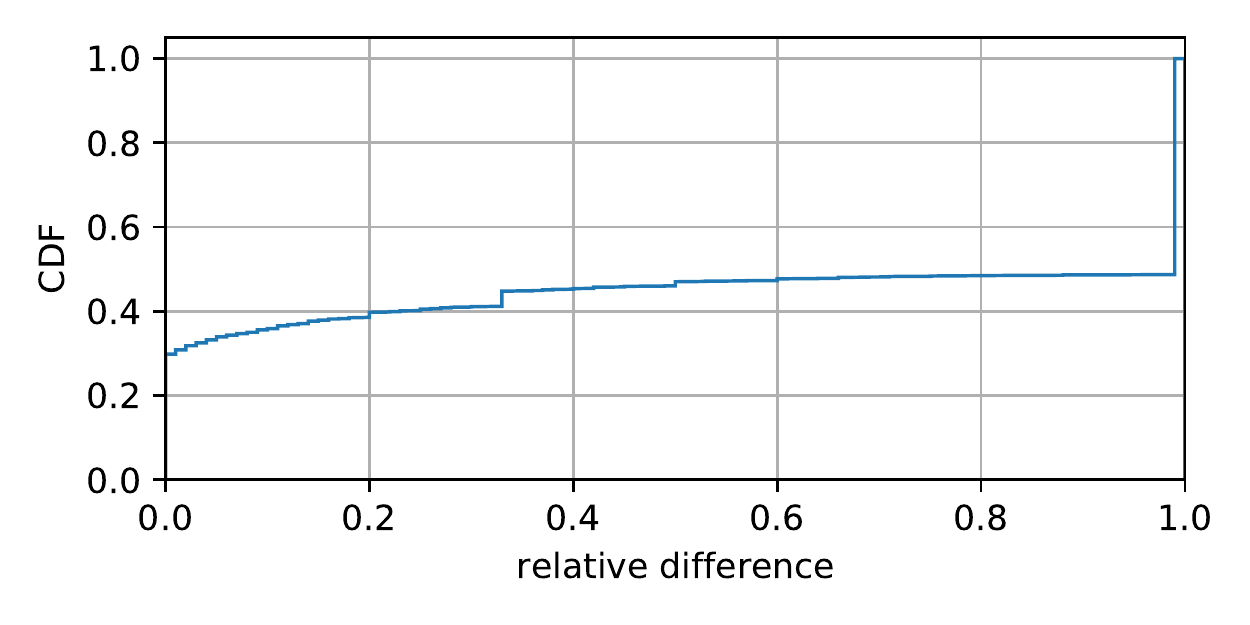}
    \caption{CDF of the relative difference $\Delta^k_j$ of the number of invocations compared to the previous one-minute interval over $10\,000$ randomly sampled interval-function pairs $(k, j)$, where at least one of the $\lambda^k_j$ and $\lambda^{k-1}_j$ values is non-zero.
    }
    \label{fig:lambda_diff_cdf}
\end{figure}
\begin{figure}
    \includegraphics[width=\columnwidth]{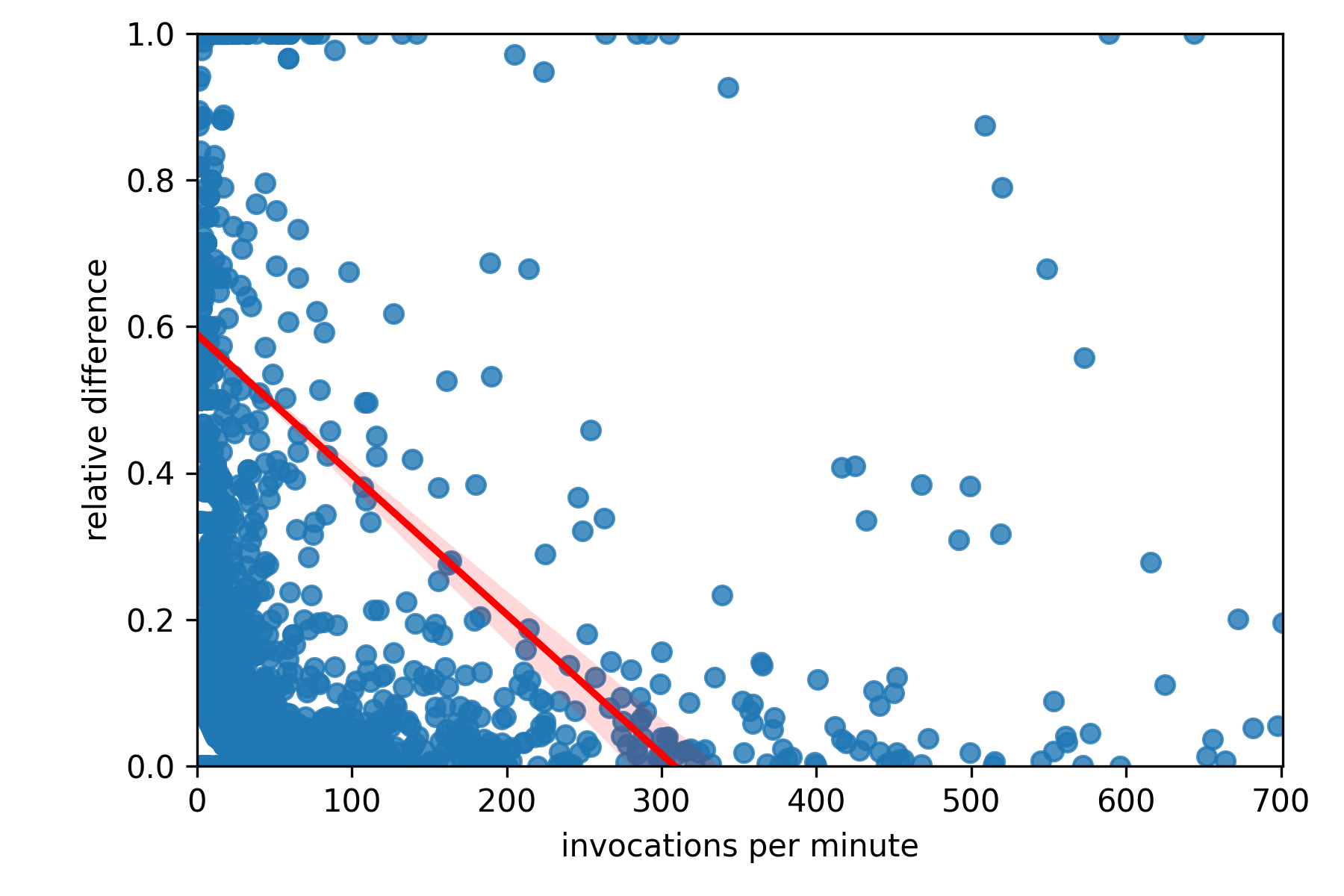}
    \caption{Relative differences for numbers of invocations compared to the previous one-minute interval. Each point shows one of $10\,000$ randomly sampled interval-function pairs $(k, j)$, where at least one of the $\lambda^k_j$ and $\lambda^{k-1}_j$ values is non-zero.
    Red line represents linear regression fit to the visible data.
    Bands around the line indicate $90\%$ confidence interval.
    }
    \label{fig:lambda_pred_cmp}
\end{figure}

Some of the scheduling algorithms presented in Sec.~\ref{sec:st} use the expected number of invocations in the $k$-th interval, $\lambda^k_j$, to make decisions online. The most straightforward way to estimate the unknown $\lambda^k_j$ is to use the number of invocations in the previous interval, $\lambda^{k-1}_j$. We analyzed how the actual number of invocations differed between two consecutive one-minute intervals. In particular, we define the \emph{relative difference} between $\lambda^k_j$ and $\lambda^{k-1}_j$ as
\[ \Delta^k_j = \begin{cases}
    0  & \text{if $\lambda^k_j = \lambda^{k-1}_j = 0$ or $\lambda^{k-1}_j$ not known} \\
    \frac{\left| \lambda^k_j  - \lambda^{k-1}_j  \right|}{\lambda^k_j + \lambda^{k-1}_j} & \text{otherwise}
\end{cases} \]

We calculated relative differences $\Delta^k_j$ for all the recorded functions $f_j$ and minutes $k$. 
We found out that $85\%$ of these were equal to zero (meaning that the number of invocations did not change): for $93\%$ of these cases (and 79\% of all) there were no invocations in both the minutes ($\lambda^k_j = \lambda^{k-1}_j = 0$).
Next, we studied in detail the cases for which $\lambda^k_j + \lambda^{k-1}_j > 0$. 
We randomly sampled $10\,000$ pairs $(k, j)$ for which $\lambda^k_j + \lambda^{k-1}_j > 0$, and calculated the corresponding relative differences.
Fig.~\ref{fig:lambda_diff_cdf} shows the CDF of the obtained $\Delta^k_j$ values. In $29\%$ of the cases the number of invocations did not change; but for roughly $51\%$ the relative difference was $1.0$, denoting cases in which either $\lambda^k_j$ or $\lambda^{k-1}_j$ was 0. Fig.~\ref{fig:lambda_pred_cmp} shows $\Delta^k_j$ as a function of $\lambda^k_j$ (for clarity of the presentation, the x-axis is cut at the 99th percentile of all $\lambda^k_j$ values). We see that the relative difference decreases with the increased number of invocations per minute.

In further sections we present algorithms taking advantage of this knowledge to predict number of invocations to come. However, our algorithms additionally rely on the $p(i)$ values, the execution times of invocations. Due to the limited space, for the analysis of $p(i)$ in the trace we refer the reader to Sec.~3.4 of the excellent analysis in~\cite{Shahrad2020}.

\section{Mapping the Azure dataset to our model}
\label{sec:model}

The theoretical model introduced in Sec.~\ref{sec:pd} is an on-line scheduling problem with release times and stochastic processing times. In this section, we show that monitoring data from Azure, a real-world FaaS system, is sufficient to fulfill the assumptions we take in the model (e.g., that the processing times are generated by an arbitrary distribution). The main issue is that when the number of events (e.g. function invocations) is large, it can be monitored only in aggregation---as in the case of the number of invocations and function execution times in the Azure dataset. Thus, we need to extrapolate these aggregations. Below, we describe how we acquire two sets of parameters required by our theoretical model: invocation times and the random distribution of processing times.

\subsection{Invocation times}

The Azure dataset does not give us the exact moments of invocation of each function. However, the total \emph{number} $\lambda^k_j$ of invocations of each function $f_j$ is known for every $k$-th monitored minute. Thus, we assume that $T$, the duration of the considered time frame, is a multiple of $60\,000$ (number of milliseconds in a minute; our base unit is a millisecond because the processing times are given in milliseconds). The number of invocation, $\lambda^k_j$, may change in the $[0, T)$ interval as, for example, some functions are called intensively in the morning and rarely during the night. In order to model such changes, we divide the $[0, T)$ time frame into $K = T/60\,000$ consecutive, one-minute intervals $v^k$: $$\begin{array}{l}
    v^1 = [0, 60\,000),\\
    v^2 = [60\,000, 120\,000),\\
    \dots\\
    v^K = [(K-1)\cdot 60\,000, T).\\
\end{array}$$
We use the values of $\lambda_j^k$ obtained directly from the Azure dataset to generate invocation times of function $f_j$ in these intervals. Following a standard queueing theory, we assume that in the interval $v^k$, each function $f_j$ is called based on the Poisson point process with rate $\lambda^k_j/60\,000$. Thus, the time (in milliseconds) between consecutive calls of function $f_j$ in the $v^k$ interval is a random variable with the $\text{Exp}(\lambda_j^k/60\,000)$ distribution. Our scheduling model does not rely on this assumption---we use $r(i)$ values, the realizations of the random variables. However, some of our scheduling algorithms estimate $\lambda^k_j$ for better scheduling decisions.

\subsection{Processing times}

Precise information on $p(i)$ values, the execution times of single invocations, is not provided in the Azure dataset. Our model assumes the existence of the distribution $\mathbb{P}_j$ (of execution times of a function $f_j$) in its exact form. However, the Azure dataset only shows selected percentiles of the empirical cumulative distribution function ($p_0$, percentile $0$, $p_1$, the $1$st percentile, $p_{25}$, $p_{50}$, $p_{75}$, and $p_{100}$). We thus approximate $\mathbb{F}_j$, the CDF of $\mathbb{P}_j$, by a piecewise-linear interpolation of these percentiles. For example, if $x \in (p_{1}, p_{25}]$, then $$\mathbb{F}_j(x) = 0.01 + 0.24\cdot\frac{x - p_{1}}{(p_{25} - p_{1})}.$$
The actual processing time of each invocation is generated based on the approximation of $\mathbb{F}_j$.

\section{Scheduling algorithms}
\label{sec:st}

Within the mapping described in the previous section, we apply both the well-known and theoretically-grounded strategies, and new approaches. In particular, we consider the following strategies.
\begin{itemize}
    \item \textbf{FIFO} (\emph{First In, First Out}, for a non-preemptive case) --- all invocations are queued in the order in which they were received. When a processor is available, it is assigned the invocation with the lowest value of $r(i)$.
    \item \textbf{SEPT} (\emph{Shortest Expected Processing Time}, for a non-preemptive case) --- when a processor is available, it is assigned the invocation with the shortest expected processing time.
    \item \textbf{FC\#} (\emph{Fair Choice} based on the number of invocations, for a preemptive and a non-preemptive case) --- when a processor is available, it is assigned the invocation which is the most unexpected. In fact, we want functions that are called occasionally to have larger priority than the frequently-invoked ones. At any point $t \in v^k$, the most unexpected invocation is the one with the lowest value of $\max\left\{\lambda_j^k, \#\{ i \colon r(i) \in v^k \text{ and } f(i) = j \}\right\}$. The priority is thus determined based on the maximum of the expected number of invocations in the considered period and the actual number of these invocations.
    \item \textbf{FCP} (\emph{Fair Choice} based on the total processing time, for a preemptive and a non-preemptive case) --- when a processor is available, it is assigned the invocation related to the least demanding function in total. In fact, we want functions that use limited resources to have larger priority than the burdening ones. At any point $t \in v^k$, the least demanding invocation is the one with the lowest value of $$\max\left\{\lambda_j^k\cdot\mathbb{E}[X \sim \mathbb{P}_j], {\sum_{\{i \colon r(i) \in v^k \text{ and } f(i) = f_j\}} p(i)}\right\}.$$ The priority is thus determined based on the maximum of the expected total processing times of invocations in the considered one-minute interval and the actual value of this amount.
    \item \textbf{RR} (\emph{Round-robin}, for a preemptive case) --- all invocations that have not been completed are queued. When a processor is available, it is assigned the first invocation from the queue. If the execution does not complete by a fixed period of time is it preempted and moved to the end of the queue. We consider periods of the length of $10$, $100$ and $1000$ milliseconds. As this strategy is used as a reference, we assume that all the invocations have the same priority.
    \item \textbf{SERPT} (\emph{Shortest Expected Remaining Processing Time}, for a preemptive case) --- when a processor is available, it is assigned the invocation with the shortest expected remaining processing time.
    The strategy is applied only when an invocation is finished or a new call is received, even if all the processors are busy. Such a restriction is introduced because otherwise executions of functions with increasing expected remaining processing times could be preempted an arbitrarily large number of times.
\end{itemize}

In all cases, if the rule is unambiguous, we select the invocation with the lowest value of $r(i)$. The expected (remaining) processing times and the values of $\lambda_j^k$ are estimated based on previous invocations.

Although the FIFO, SEPT, RR and SERPT strategies are well-known, we implemented them among our solutions. For each of SEPT, SERPT, FC\# and FCP strategies, we introduce two methods: \textsc{Position} and \textsc{Update}. The first method returns the position of an invocation in the execution queue. For example, in case of SEPT, it returns the expected processing time of a function. The \textsc{Update} method is called after the execution of the invocation ends, and it updates auxiliary data structures that are used by the \textsc{Position} method. The framework algorithm (Alg.~\ref{alg:framework}) for the above strategies is based on a standard scheduling loop. In this loop, we prioritize calls and choose the one with the lowest position.

\begin{algorithm}[tb]
    \caption{A framework scheduling algorithm}
    \label{alg:framework}
    \begin{algorithmic}[1]
    \State {$Q \gets \{\}$} \Comment{A queue of incoming invocations}
    \State {$A \gets \{\}$} \Comment{A set of acknowledged invocations}
    \State {$E \gets \{\}$} \Comment{A set of invocations being processed}
    \State {$\text{preemptive} \in \{\text{true}, \text{false}\}$} \Comment {Is the strategy preemptive?}
    \While{true}
        \State {\textbf{wait until} (a processor is free \textbf{and} ($|A| + |Q| > 0$)) \textbackslash \newline \hspace*{4em}\textbf{or} (preemptive \textbf{and} $|Q| > 0$)}
        \For{\textbf{each} $i$-th call in $E$ that has been finished}
            \State $\textsc{Update}(f(i), p(i))$
            \State {Remove $i$ from $E$}
        \EndFor
        \State {Move all the invocations from $Q$ to $A$}
        \If{preemptive}
            \State {Move all the invocations from $E$ to $A$}
        \EndIf
        \While {$|A| > 0$ \textbf{and} there are free processors}
            \State $i' \gets \argmin_{i \in A} \textsc{Position}(f(i), p(i))$
            \State \Comment Here, $p(i)$ is a partial processing time
            \State {Assign the $i'$-th call to any free processor}
            \State {Move $i'$ from $A$ to $E$}
        \EndWhile
    \EndWhile
    \end{algorithmic}
\end{algorithm}

The positions of invocations are calculated differently for different strategies. We present pseudocodes for two of them, a non-preemptive SEPT and a preemptive SERPT, as FC\# and FCP are similar. Alg.~\ref{alg:SEPT} defines the \textsc{Position} and \textsc{Update} methods for case of the SEPT strategy. As the invoked function is known at the moment of the call, the methods are similar for all the functions, but the auxiliary data structures are function-dependent. At any point in time, we store two values for each function $f_j$: the total processing time ($\text{TPT}_j$) of all its previous invocations and a number of these invocations ($\text{NOC}_j$). The expected processing time is approximated using a standard estimator, the average execution time $\text{TPT}_j / \text{NOC}_j$.

\begin{algorithm}[tb]
    \caption{SEPT}
    \label{alg:SEPT}
    \begin{algorithmic}[1]
    \For{$j \in \{1, 2, \dots, n\}$} \Comment{Set initial values}
        \State $\text{TPT}_j, \text{NOC}_j \gets 0, 0$
    \EndFor

    \Procedure{Position}{$j$, $p$} \Comment{$p$ is always $0$}
    \If{$\text{NOC}_j > 0$}
        \Return $\text{TPT}_j/\text{NOC}_j$
    \ElsIf{$\sum_j\text{NOC}_j = 0$}
        \Return $0$
    \Else{}
        \Return $\sum_j\text{TPT}_j/\sum_j\text{NOC}_j$
    \EndIf
    \EndProcedure
    
    \Procedure{Update}{$j, p$}
        \State $\text{TPT}_j \gets \text{TPT}_j + p$
        \State $\text{NOC}_j \gets \text{NOC}_j + 1$
    \EndProcedure
    \end{algorithmic}
\end{algorithm}

Similarly, Alg.~\ref{alg:SERPT} provides the same methods for the SERPT strategy. For each function $f_j$, we store execution times of its previous invocations in the $\text{PT}_j$ vector. In general, this vector can be arbitrarily long. However, to reduce the memory footprint of the algorithm, we might want to limit its size. In such a case, the oldest values can be replaced with the newest ones. This approach has two main advantages: (1) if the distribution of processing times changes in time, it can be reflected within the algorithm, (2) the memory is saved. On the other hand, if the number of remembered values is limited, the accuracy of the estimation of the expected remaining processing time (ERPT) is limited too. For clarity of the presentation, we omit most of the implementation details. For example, one can use binary search or priority queues to improve the complexity of the presented approach.

In order to determine the ERPT of a call of $f_j$ after $p$ milliseconds, we select from $\text{PT}_j$ the execution times that were equal to or exceeded $p$. We then estimate the expected processing time of the current invocation based on the standard estimator, similar to the one presented in case of SEPT.

\begin{algorithm}[tb]
    \caption{SERPT}
    \label{alg:SERPT}
    \begin{algorithmic}[1]
    \For{$j \in \{1, 2, \dots, n\}$} \Comment{Set initial values}
        \State $\text{PT}_j \gets \{\}$ \Comment{A set of previous processing times}
    \EndFor

    \Procedure{Position}{$j$, $p$}
    \State {$pp, pc \gets 0, 0$}
    \For {$pt \in \{x \in \text{PT}_j \colon x - p \geq 0\}$}
        \State {$pp \gets pp + (pt - p)$}
        \State {$pc \gets pc + 1$}
    \EndFor

    \If{$pc > 0$}
    \Return $pp/pc$
    \Else{}
        \For{$pt \in \{x \in \cup_j\text{PT}_j \colon x - p \geq 0\}$}
            \State {$pp \gets pp + (pt - p)$}
            \State {$pc \gets pc + 1$}
        \EndFor
        \If{$pc > 0$}
        \Return $pp/pc$
        \Else{}
        \Return $0$
        \EndIf
    \EndIf
    \EndProcedure
    
    \Procedure{Update}{$j, p$}
        \State {Add $p$ to $\text{PT}_j$}
    \EndProcedure
    \end{algorithmic}
\end{algorithm}

\section{Evaluation}
\label{sec:ev}

We evaluate and compare our algorithms using discrete-time simulations, as this method enables us to perform evaluations on a large scale. In this section, we present the results of these evaluations.
In Sec.~\ref{sec:ev-est}, we present estimation models used with the algorithms.
To make sure that our input data matches real-world scenarios, we generate test instances based on the Azure Functions Trace, with mapping indicated in Sec.~\ref{sec:model}.
In Sec.~\ref{sec:ev-preprocess}, we describe data preprocessing, and in Sec.~\ref{sec:ev-method} we describe how we create inputs for our simulator.
Finally, in Sec.~\ref{sec:ev-metrics}--\ref{sec:ev-params}, we analyse results of the simulations and the behavior of the tested algorithms.

We implemented the simulator and all the algorithms in \textsc{c++} and run the simulator on Intel Xeon Silver 4210R CPU @ 2.40GHz with 250GB RAM. The simulator was validated using unit tests and a close-up inspection of results on small instances.

\subsection{Estimations and baselines}
\label{sec:ev-est}
Our algorithms rely heavily on probabilistic estimations of parameters (e.g. the processing time $p(i)$); these in turn depend on estimations of parameters of the generating distributions (e.g. $\mathbb{E}[X \sim \mathbb{P}_j]$). The methods we use are simple. To measure how much we loose with this simplicity, we compare our methods with the ground truth on two different levels. 

First, we compare our probabilistic methods with exact \emph{clairvoyant} algorithms that rely on the knowledge of the true execution time (that the real-world scheduler clearly does not have):
the SPT (\emph{Shortest Processing Time}) strategy in the non-preemptive case and the SRPT (\emph{Shortest Remaining Processing Time}) strategy in the preemptive case. These two non-real strategies are based on full knowledge of the actual execution time of invocations that are not yet finished. This way we can analyze how strategies based on expectations approach these standard theoretically-grounded strategies for fixed processing times---and thus the limits of how much the algorithms can further gain from better estimates.

Second, our probabilistic methods estimate the parameters of distributions.
For each invocation of function $f_j$, algorithms presented in Sec.~\ref{sec:st} may require information about its expected (remaining) processing time or the expected number of invocations of function $f_j$ in the current one-minute interval.
As we want to measure the influence of the imperfection of such estimations, some of the algorithms are compared in three different variants: the \emph{reactionary} one (RE), a \emph{limited reactionary} one (RE-LIM), and the \emph{foresight} one (FOR).

In the \emph{reactionary} variant, the EPT, ERPT and $\lambda_j^k$ values for the $f_j$ function are not known and thus are estimated based on \emph{all} the previous invocations of function $f_j$. The $\lambda_j^k$ values in the reactionary model are predicted a~priori based on the actual number of invocations in the previous one-minute interval, i.e.
$$\lambda_j^{k} = \begin{cases}
    1 & \text{if $k = 1$}\\
    \# \{i \colon f(i) = j \text{ and } r(i) \in v^{k-1}\} & \text{if $k > 1$}\\
\end{cases}$$
The EPT and ERPT values are estimated as shown in Alg.~\ref{alg:SEPT}--\ref{alg:SERPT}. In particular, if it is not possible to estimate any of these values (e.g. there were no previous calls of a particular function), an arbitrary default value is used. For example, SEPT uses the average processing time of all previous invocations.

Although the storage needed to estimate the expected processing time of an invocation (see Alg.~\ref{alg:SEPT}) does not depend on the number of invocations, this is not a case when the ERPT value is calculated (see Alg.~\ref{alg:SERPT}).
In a real system, keeping information about all the previous calls of any function $f_j$ is not practical.
Therefore, we introduce \emph{limited reactionary variants} (RE-LIM) of some algorithms, in which we keep information about at most $1\,000$ last invocations of each function $f_j$.
(We also tested RE-LIM limited to $10$ and to $100$ executions which resulted in significantly worse performance).

In order to check how better estimators would impact the performance of algorithms, we compare the algorithms against \emph{foresight} (FOR) variants which use actual parameters of the distributions used to generate the instance. These parameters correspond to the perfect, clairvoyant estimations. More formally, in the foresight variants we assume that for each function $f_j$ the values of $\mathbb{E}[X \sim \mathbb{P}_j]$ (expected processing times), $\mathbb{E}[X \sim \mathbb{P}_j | X \geq p] - p$ (expected remaining processing times) and $\lambda_j^k$ are estimated a priori based on the whole instance. However, the algorithms are still probabilistic, e.g., for a just-released job we know its expected processing time, $\mathbb{E}[X \sim \mathbb{P}_j]$, but not the actual processing time $p(j)$.

\subsection{Preprocessing of the trace data}
\label{sec:ev-preprocess}
\begin{algorithm}[tb]
    \caption{Instance generation (based on Azure)}
    \begin{algorithmic}[1]
    \Function{Fill}{$T_1$, $T_2$, $m$, $\chi$, $\varepsilon$}
    \State $F \gets \{1, 2, \dots, n\}$
    \State $I \gets \{\}$ \Comment{Generated instance}
    \State $L \gets 0$ \Comment{Total load}
    \Do
    \State $j \gets \textsc{Random}(F)$
    \State $I_j \gets \text{seq. of invocation of the } f_j \text{ function}$
    \State $L_j \gets \text{total load of } I_j$
    \If{$L + L_j \leq (1 + \varepsilon)\cdot\chi\cdot m\cdot(T_2 - T_1)$}
        \State $S \gets S \cup \{j\}$
        \State $L \gets L + L_j$
        \State $I \gets I \cup I_j$
    \EndIf
    \State $F \gets F \setminus \{j\}$
    \doWhile{$L < \chi\cdot m\cdot(T_2 - T_1) \wedge |F| > 0$}
    \State \Return $I$ 
    \EndFunction
    \end{algorithmic}
    \label{alg:gen}
\end{algorithm}

We pre-processed the Azure dataset as follows.
First, we filtered out 38 functions having multiple records per day, leaving $671\,404$ out of $671\,080$ records from all the 14 days.
Then, as indicated in Sec.~\ref{sec:azure-dataset}, we removed all the records that were not related to HTTP invocations, which further narrowed the dataset to $200\,194$ records.
We were particularly interested in functions invoked by HTTP requests, as they are less regular and their invocation patterns are harder to predict and optimize --- contrary to functions triggered by internal events (e.g \texttt{cron} tasks).
Finally, we omitted functions containing missing data (i.e. missing information about execution times), which resulted in $199\,524$ records of data on $30\,325$ individual functions.

As the trace did not provide any information on either the I/O-intensiveness of individual functions, or the characteristics of I/O devices used in clusters, we assumed that functions are CPU-intensive. The given function processing times include the time needed to perform all the I/O operations. Thus, in our simulations, a processor remains busy during I/O phases. However, on the kernel level, the job performing the I/O would change its state to ``waiting'' (``not ready'') and another ready job would be assigned to the processor. As a consequence, our simulation results provide upper-bounds for what we could expect in case of I/O-intensive functions.

\subsection{Generating instances}
\label{sec:ev-method}

The performance of the scheduling algorithms was tested for various configurations.
Each configuration specified the number of available processors ($10$, $20$, $50$ or $100$), their desired average load ($70\%$, $80\%$, $90\%$ or $100\%$) and the time frame duration $T$ ($30$, $60$ or $100$ min).

For each configuration we generated $20$ independent instances (each box shows statistics over 20 instances).
For each instance, we randomly selected a window $[T_1, T_2)$ of length $T$ within one of $14$ days of the trace. This way, each instance was generated based on the data coming from a consistent interval in the trace. From within the $[T_1, T_2)$ window, we randomly selected a subset of functions so the average load $\chi$ is achieved for the given number $m$ of processors. Alg.~\ref{alg:gen} describes this process. First, we pick all functions having any invocation in the $[T_1, T_2)$ window (for clarity, we denote these functions by $1, 2, \dots, n$). Then, we randomly select functions from this set until the load of the generated instance is in the $[\chi, (1+\varepsilon)\chi]$ range (with $\varepsilon=2\%$) or the set of available functions becomes empty.
For a selected function $f_j$, we generate a sequence of its invocations using the provided information about the number of calls within each minute of the $[T_1, T_2)$ time frame and the percentiles of average execution times of its invocations during the day.
This process is fully consistent with the mapping described in Sec.~\ref{sec:model}.
The invocations of the $f_j$ function are included into the generated instance if the total load after such an inclusion does not exceed $(1+\varepsilon)\chi$ of the total available CPU time.
From this point on, we map $[T_1, T_2)$ to $[0, T)$.
The generated instance contains all the information required to evaluate the proposed algorithms -- for the $i$-th invocation we provide: the moment of call $r(i)$, a reference $f(i)$ to the invoked function and the true processing time $p(i)$. We stress that the $p(i)$ value is not revealed to the reactionary variants of the proposed algorithms, so they are required to estimate them online.

\subsection{Comparison of different algorithms}
\label{sec:ev-metrics}

\begin{figure*}[tb]
    \centering
      \subfloat[Average flow time]{{
          \includegraphics[width=0.25\textwidth,trim={0 0mm 0 0mm},clip]{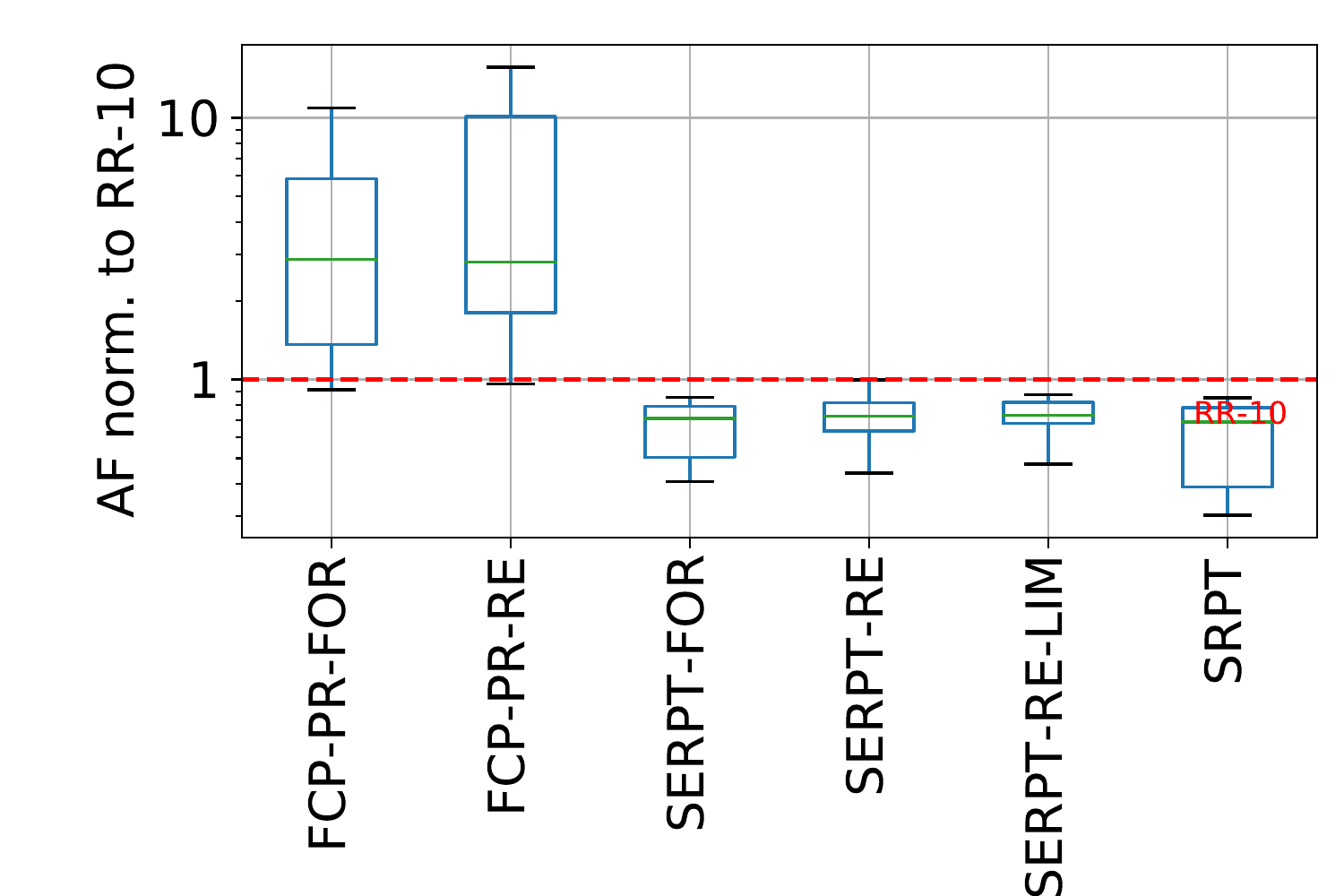}
          \includegraphics[width=0.25\textwidth,trim={0 0mm 0 0mm},clip]{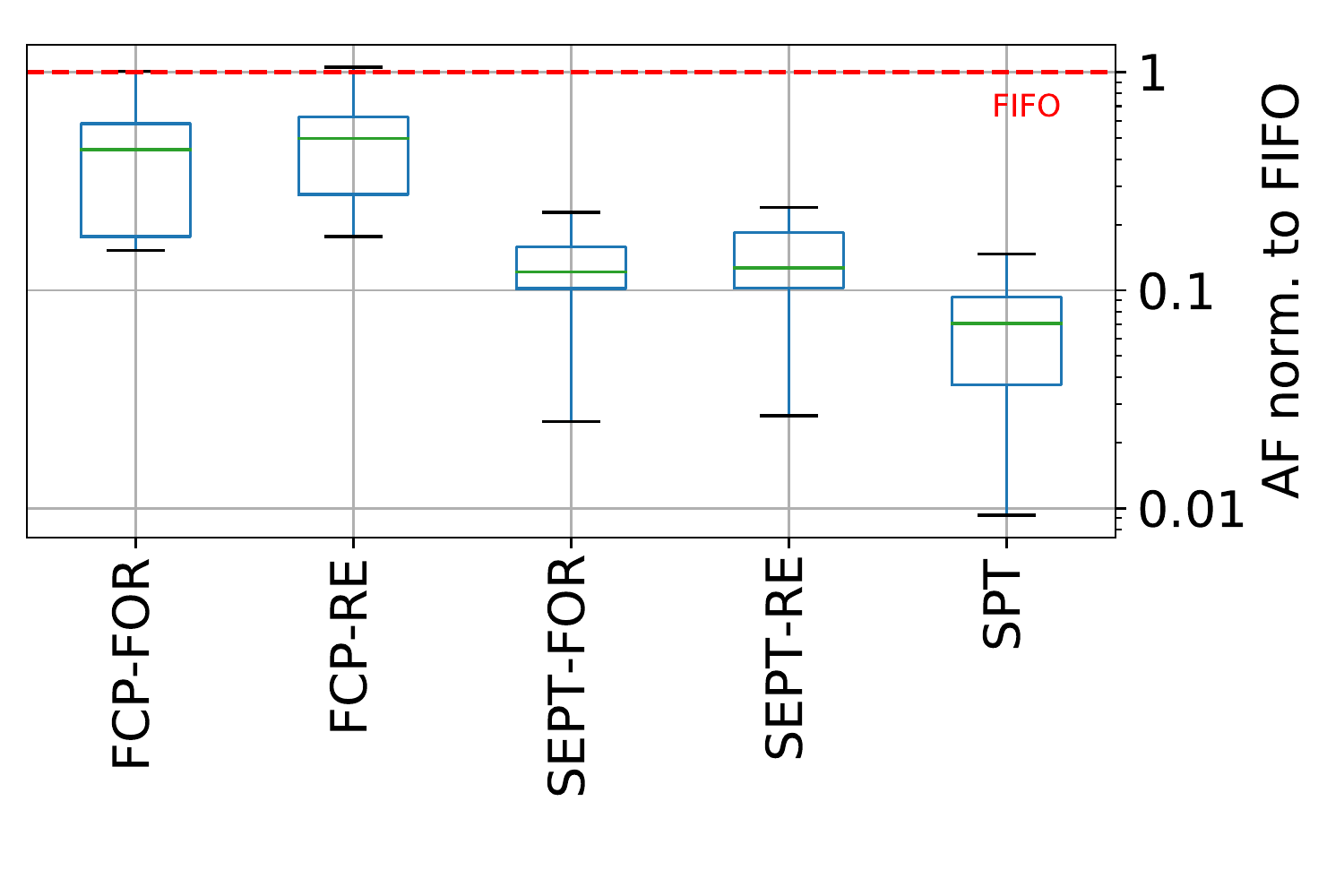}}}%
          \subfloat[Average stretch]{{
            \includegraphics[width=0.25\textwidth,trim={0 0mm 0 0mm},clip]{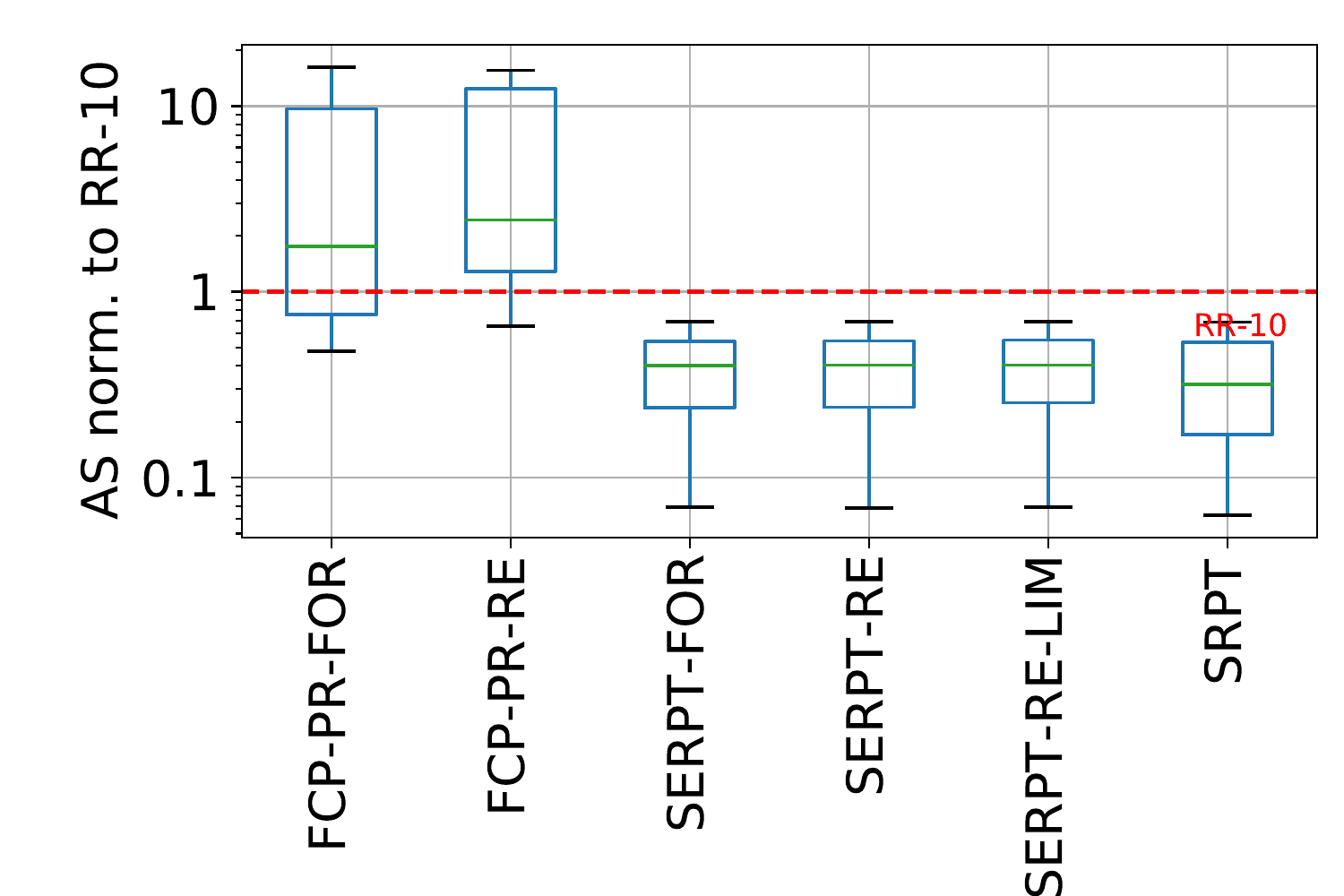}
            \includegraphics[width=0.25\textwidth,trim={0 0mm 0 0mm},clip]{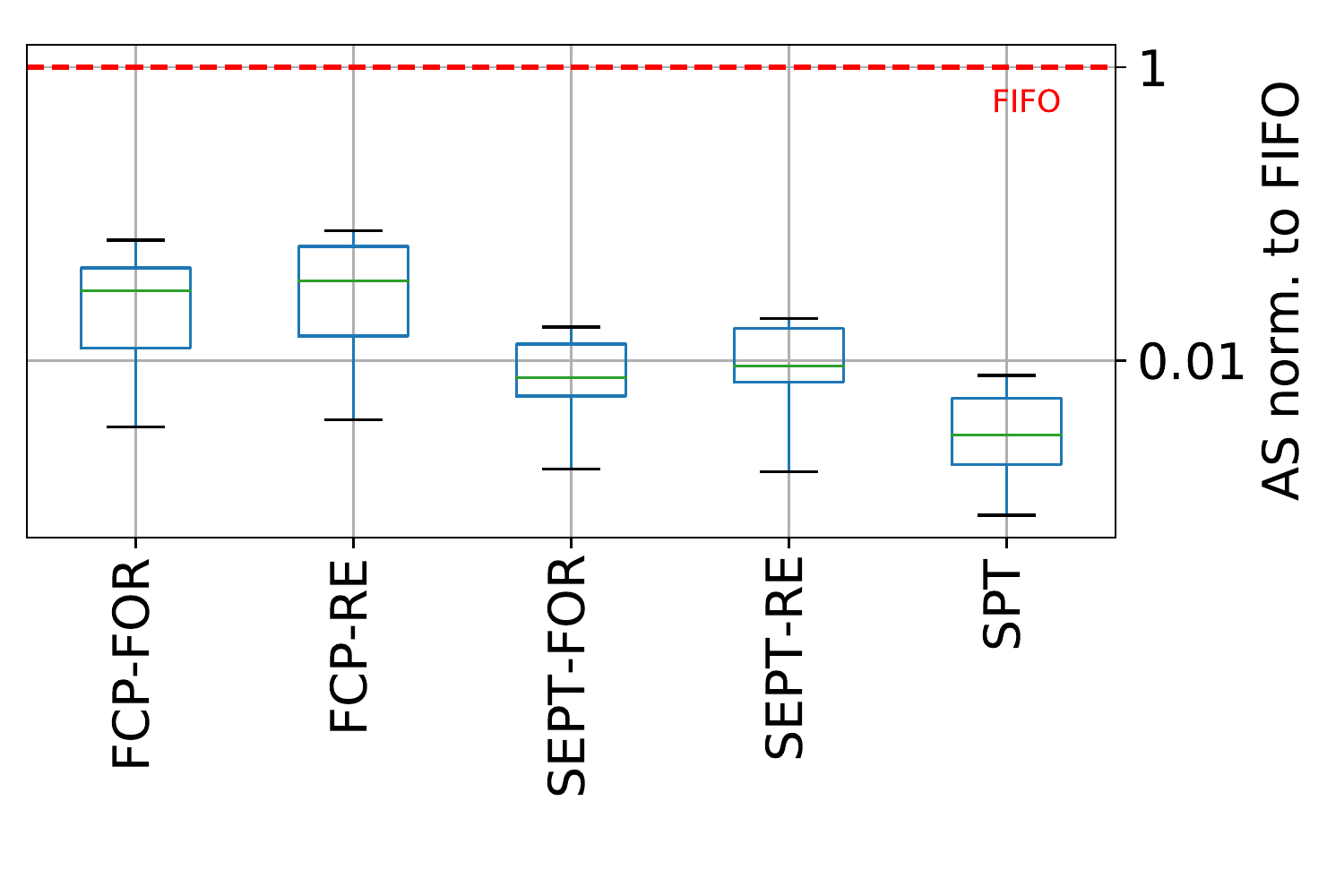}}}%
        \\
        \subfloat[99th percentile of flow time]{{
          \includegraphics[width=0.25\textwidth,trim={0 0mm 0 0mm},clip]{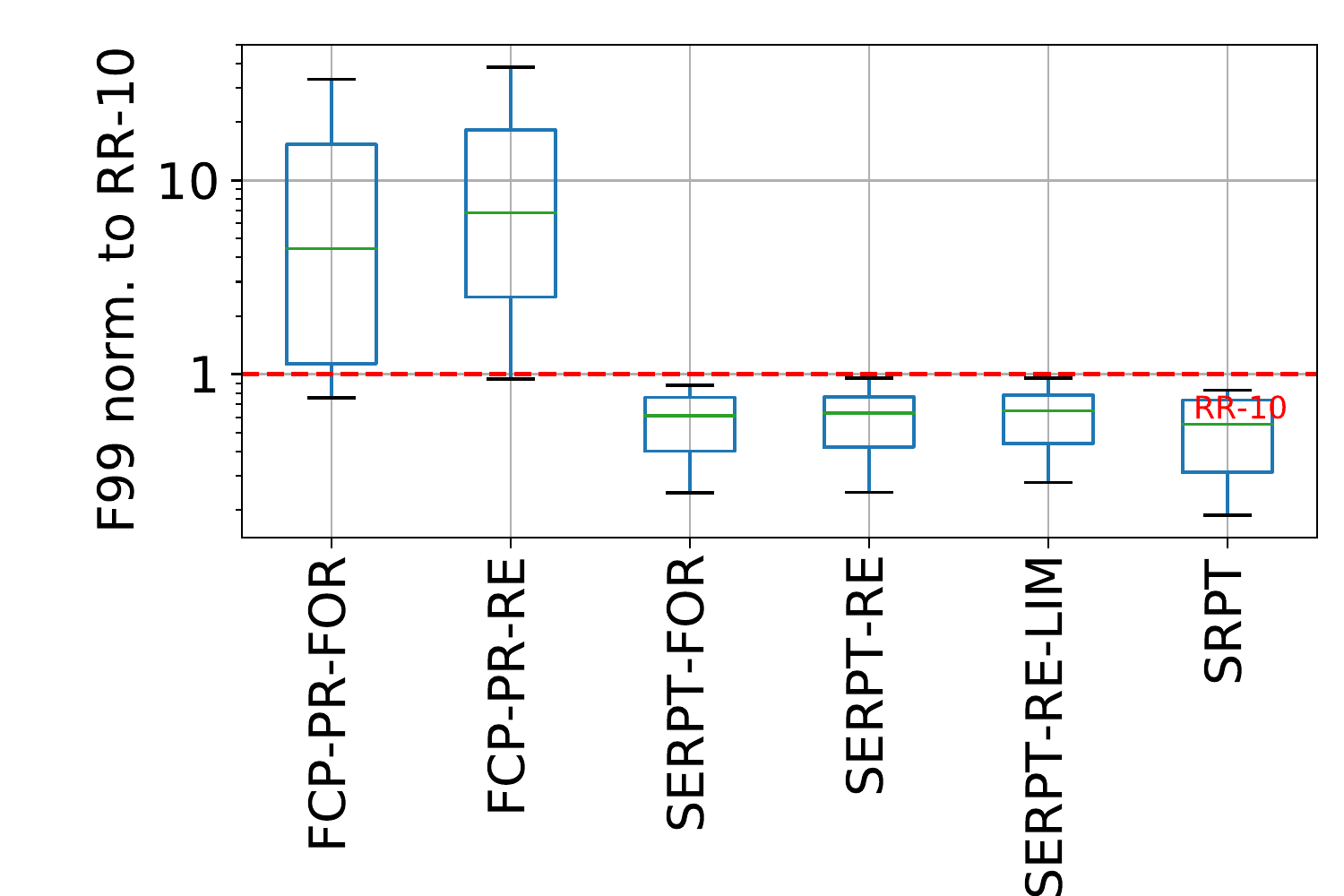}
          \includegraphics[width=0.25\textwidth,trim={0 0mm 0 0mm},clip]{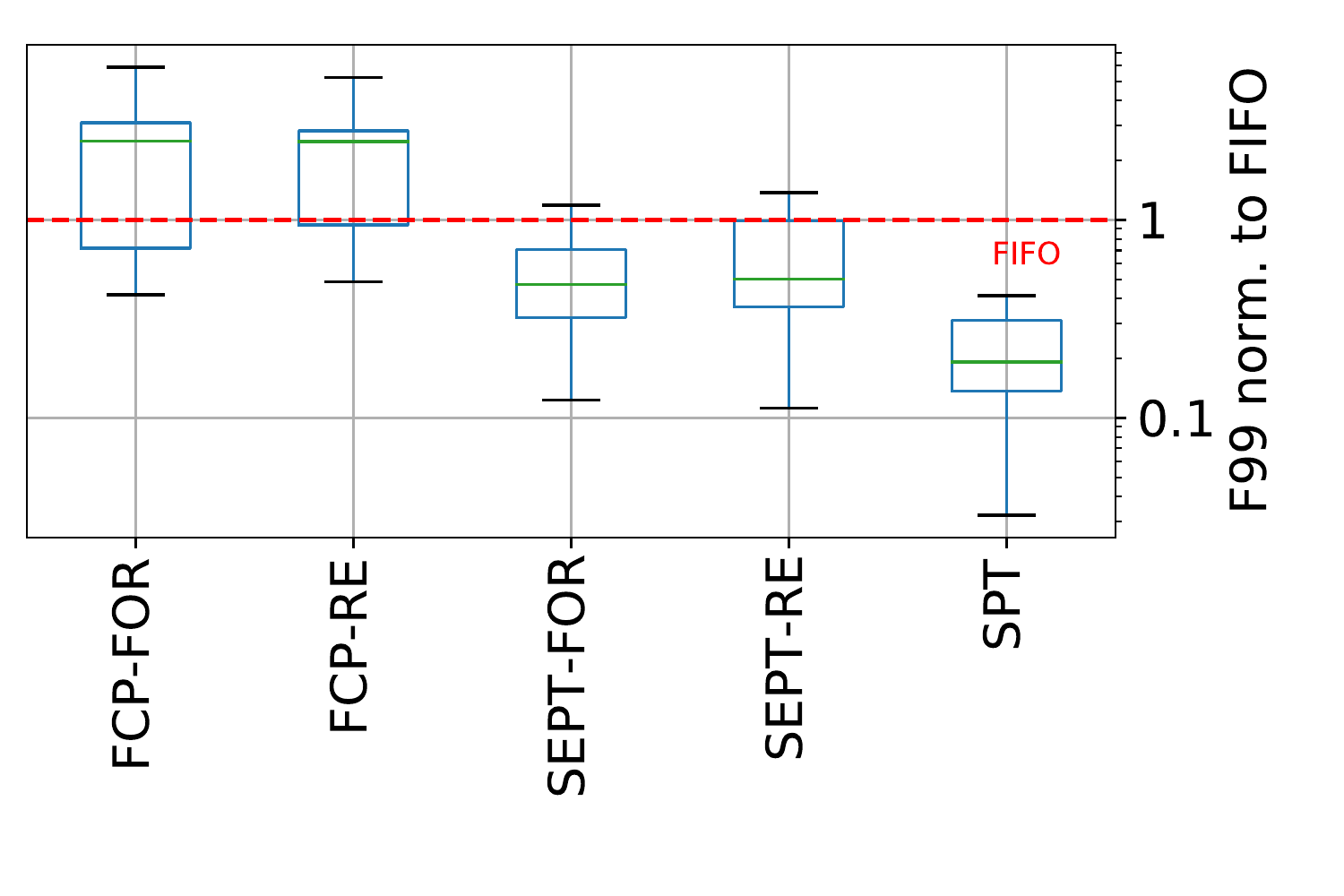}}}%
          \subfloat[99th percentile of stretch]{{
          \includegraphics[width=0.25\textwidth,trim={0 0mm 0 0mm},clip]{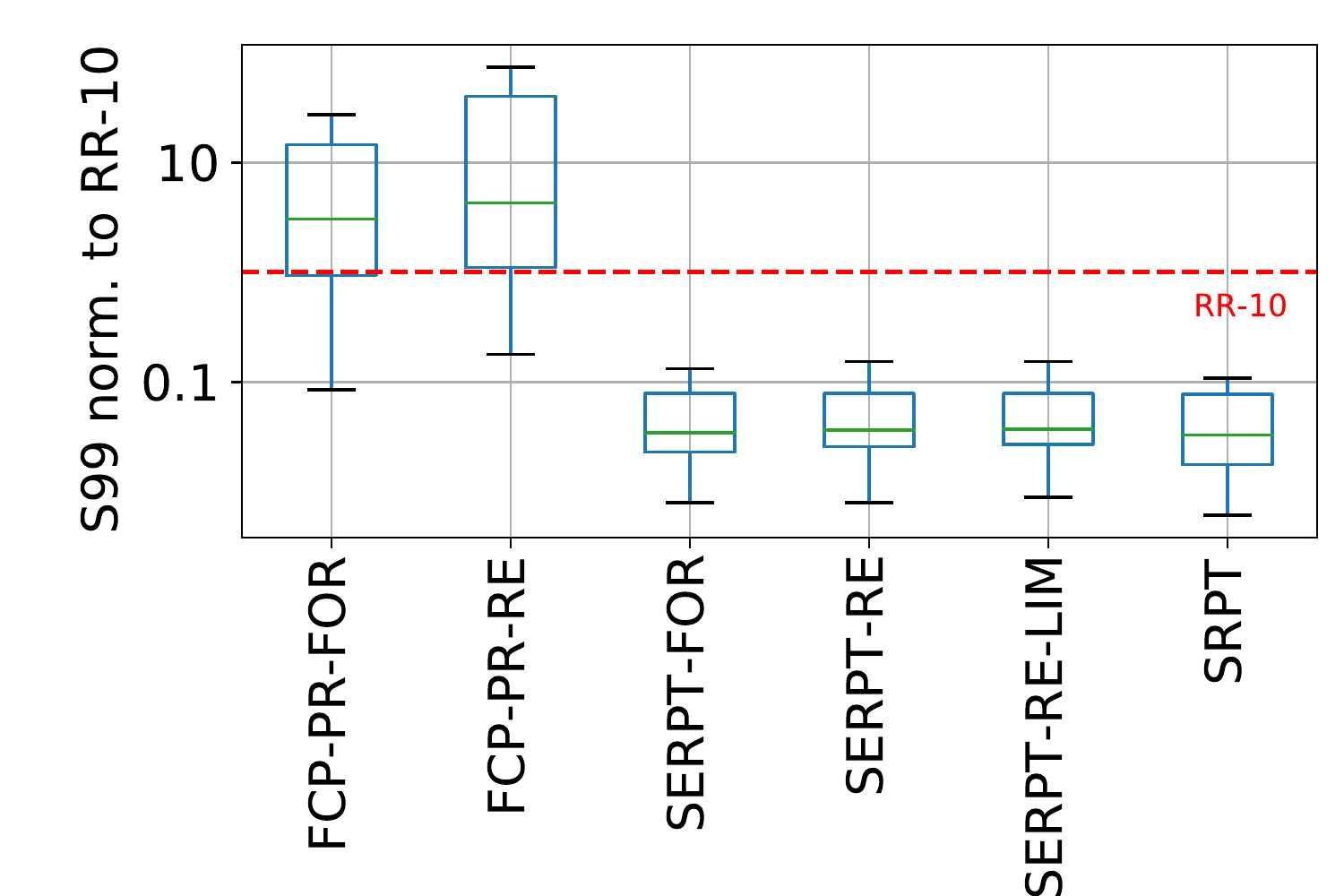}
          \includegraphics[width=0.25\textwidth,trim={0 0mm 0 0mm},clip]{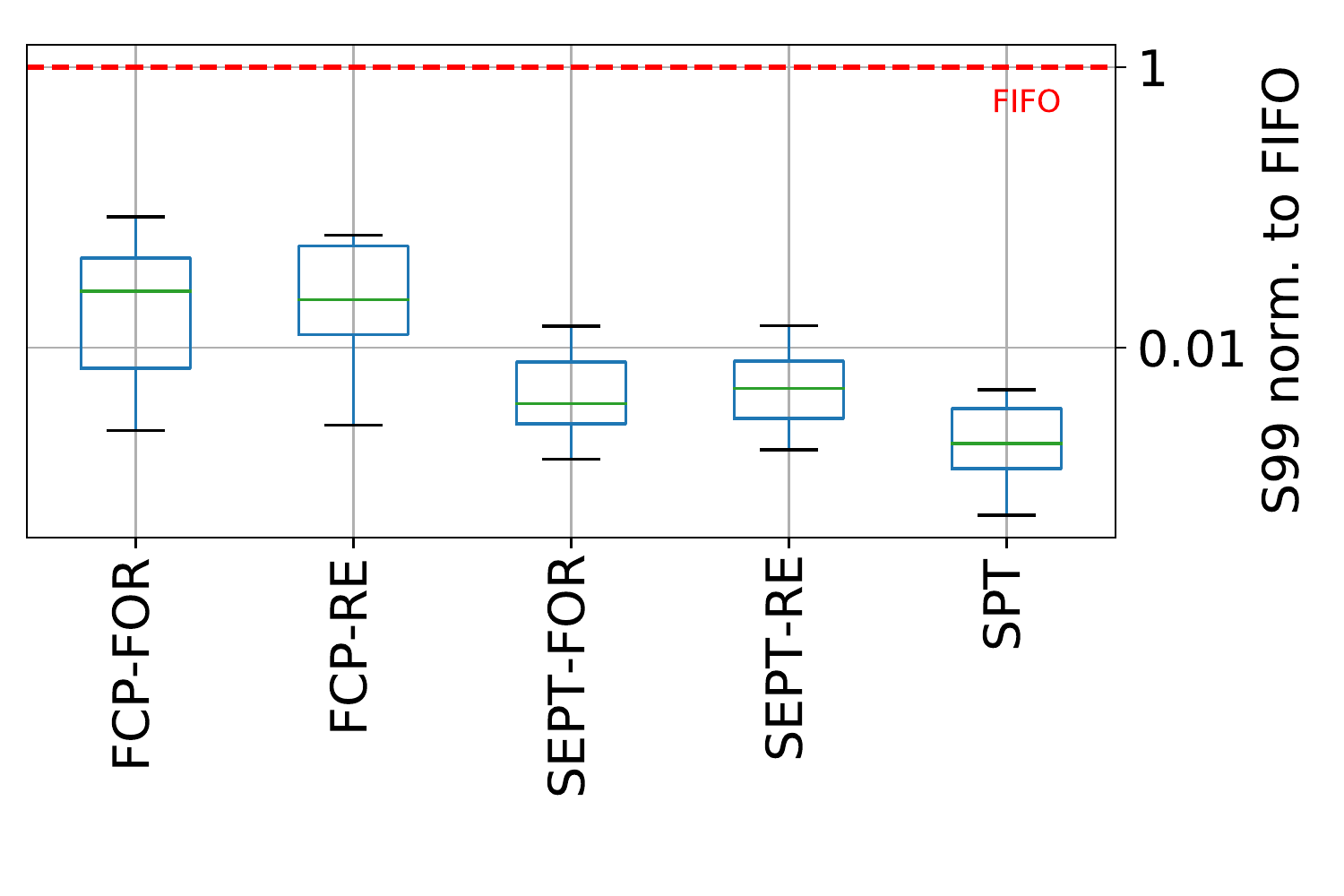}}}%
      \\
      \subfloat[Average function-aggregated flow time]{{
          \includegraphics[width=0.25\textwidth,trim={0 0mm 0 0mm},clip]{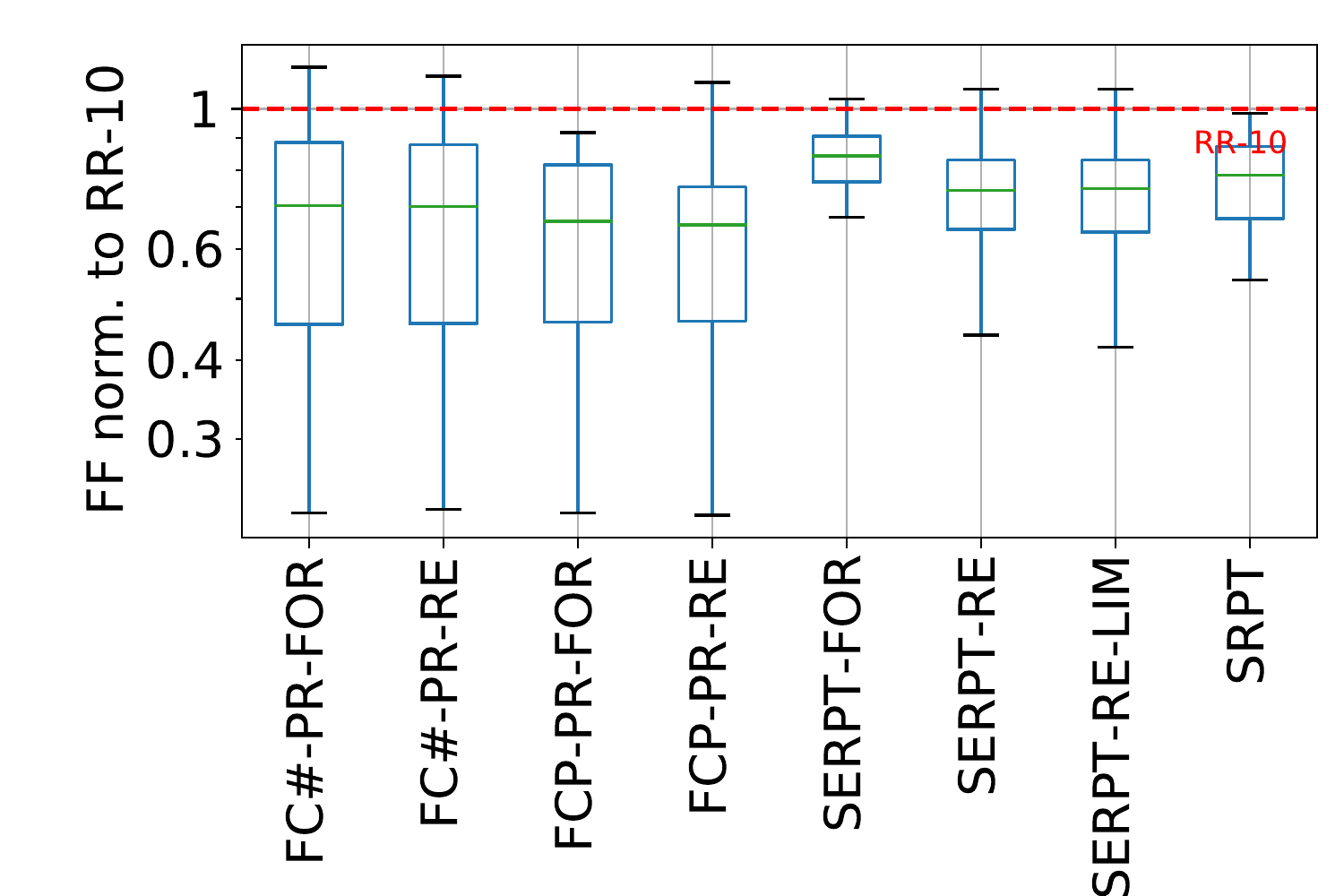}
          \includegraphics[width=0.25\textwidth,trim={0 0mm 0 0mm},clip]{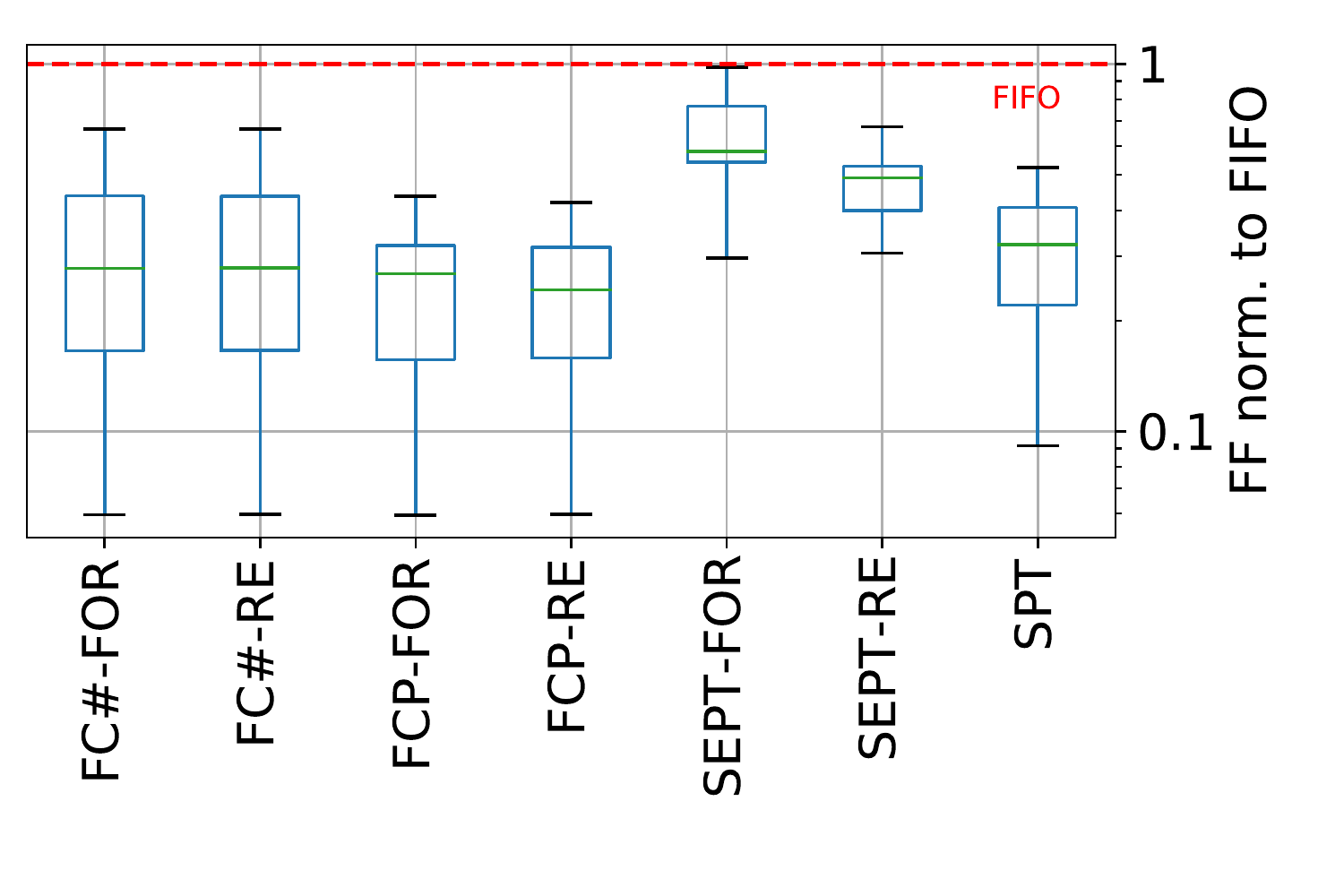}}}%
          \subfloat[Average function-aggregated stretch]{{
            \includegraphics[width=0.25\textwidth,trim={0 0mm 0 0mm},clip]{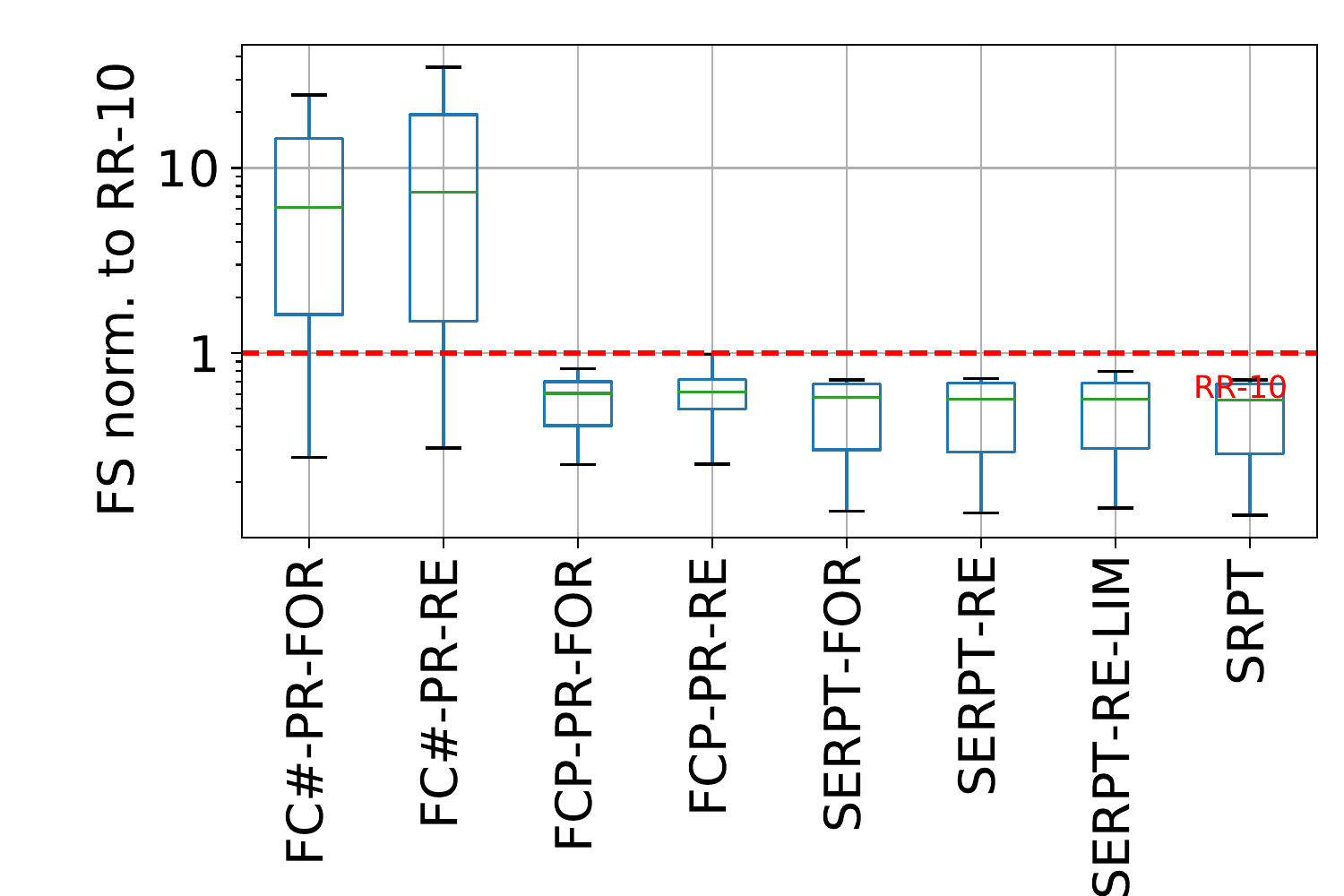}
            \includegraphics[width=0.25\textwidth,trim={0 0mm 0 0mm},clip]{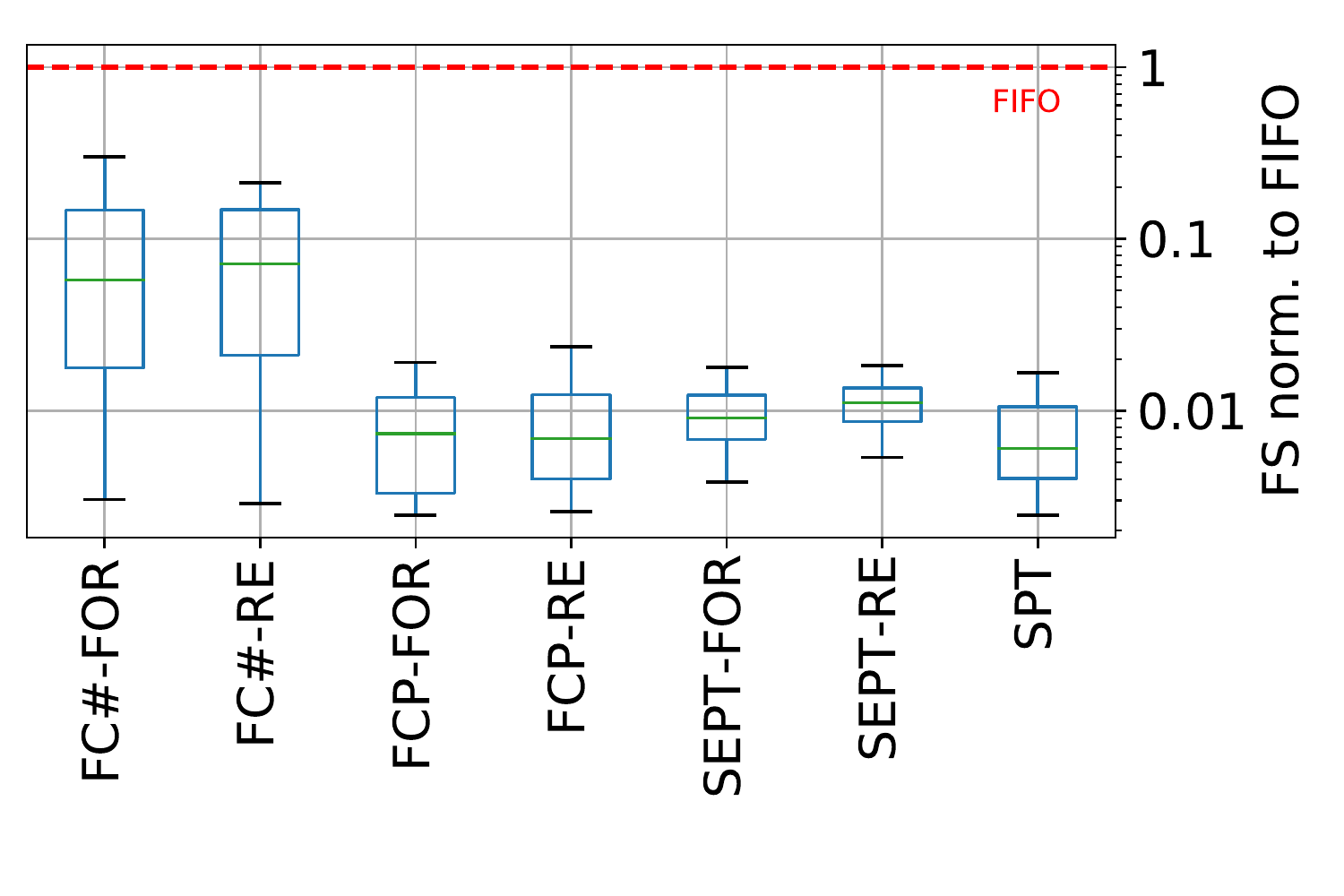}}}%
      
    \caption{Comparison of different metrics. Each box shows a statistics over 20 independent instances. Each instance has 20 processors, 30-minute time frame and 90\% load.
    }
	\label{fig:all_cmp}
\end{figure*}

Fig.~\ref{fig:all_cmp} presents the comparison of different metrics (formally defined in Sec.~\ref{sec:pd}) for configuration of $20$ processors, $90$\% average load and a $30$-minute time frame $T$.

We split results into two groups: preemptive algorithms (left side of figures) and non-preemptive algorithms (right side).
To mitigate the impact of the variability of results between instances, for each instance we \emph{normalize} the performance metric (e.g. the average flow time) by the performance of a baseline algorithm.
Results for preemptive algorithms are normalized to round-robin with a $10$-millisecond period (denoted by RR-10), i.e. metric values for each test case are divided by corresponding results for RR-10. (We also tested round-robin variants with periods of $100$ and $1\,000$ milliseconds, but they had worse results than RR-10 for all tested metrics, thus we skip them).
Similarly, results for non-preemptive algorithms are normalized to FIFO.
As RR-10 and FIFO always have normalized performance equal to 1, they are not shown on graphs.

For all considered performance metrics, our proposed SERPT and SEPT algorithms significantly improve the results compared to the baselines, round-robin and FIFO. The smallest improvements are in flow-time related metrics for preemptive case (Fig.~\ref{fig:all_cmp}, (a) and (c), left)---but, as SERPT is close to the clairvoyant SRPT, we see that there is not much space for improvement. In the non-preemptive variants (Fig.~\ref{fig:all_cmp}, (a) and (c), right), the improvements in the average flow time are almost an order of magnitude.
The reduction in stretch (Fig.~\ref{fig:all_cmp}, (b) and (d)) is larger: in non-preemptive variants by two orders of magnitude; in preemptive variants from 2-times for the average and to more than an order of magnitude for the 99th percentile. 
For all these metrics, SERPT in reactionary and foresight variants are close to SRPT, even though SRPT is clairvoyant while SERPT relies on estimates. This proves that our simple estimates of processing times are sufficient.
However, in non-preemptive cases, the difference between SEPT and the clairvoyant SPT is larger: here, the impact of a wrong processing time estimate is harder to correct.
SERPT limited to 1000 last executions (SERPT-RE-LIM) performed similarly to SERPT-RE for all tested metrics, which is promising, as that variant requires less memory when implemented in a real-world scheduler.
For our fair, function-aggregated metrics (Fig.~\ref{fig:all_cmp}, (e) and (f)), FCP dominates other variants including FC\# (which we skip from other figures as it was always dominated by FCP)---with an exception of average stretch in the clairvoyant variant, where FCP performance is similar to SERPT.

\subsection{Impact of instance parameters}
\label{sec:ev-params}

\begin{figure*}[tb!]
    \centering
      \subfloat[FIFO]{{\includegraphics[width=0.16\textwidth,trim={0 2mm 0 6mm},clip]{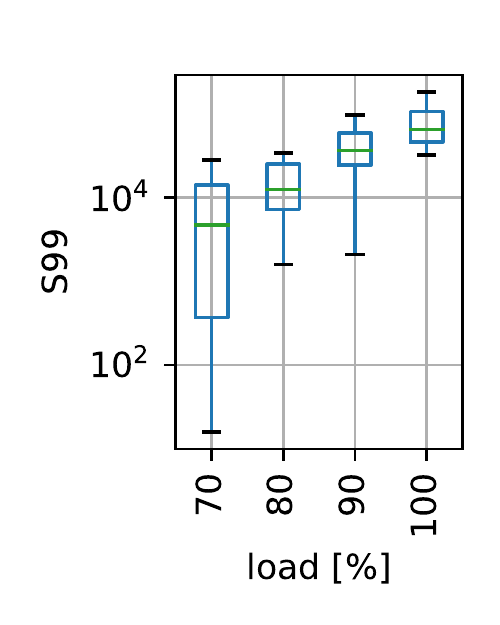}}}%
      \subfloat[RR-10]{{\includegraphics[width=0.16\textwidth,trim={0 2mm 0 6mm},clip]{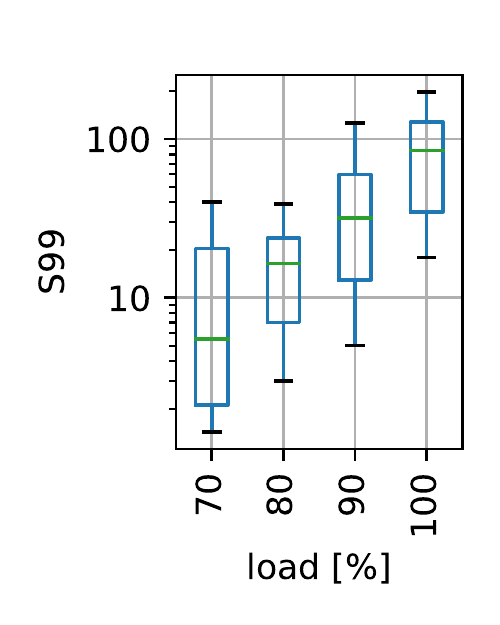}}}%
      \subfloat[SERPT-RE-LIM]{{\includegraphics[width=0.16\textwidth,trim={0 2mm 0 6mm},clip]{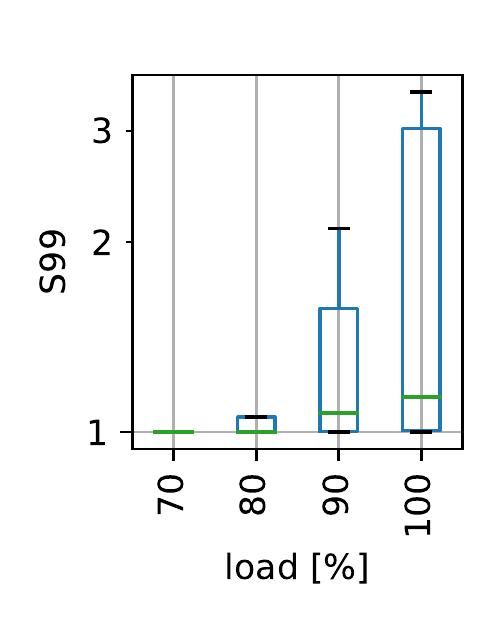}}}%
      \subfloat[SERPT-RE]{{\includegraphics[width=0.16\textwidth,trim={0 2mm 0 6mm},clip]{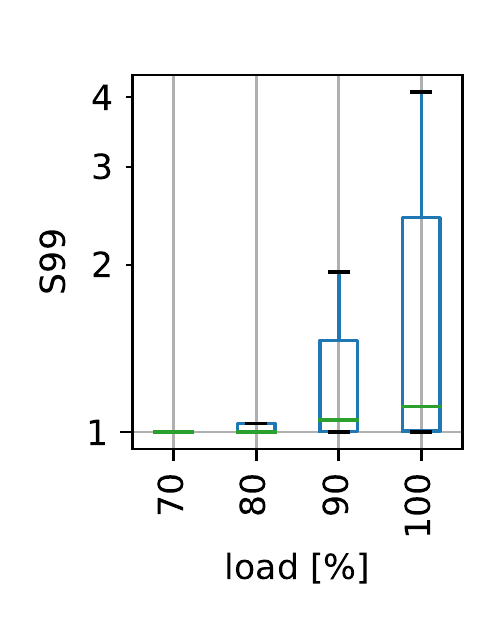}}}%
      \subfloat[SRPT]{{\includegraphics[width=0.16\textwidth,trim={0 2mm 0 6mm},clip]{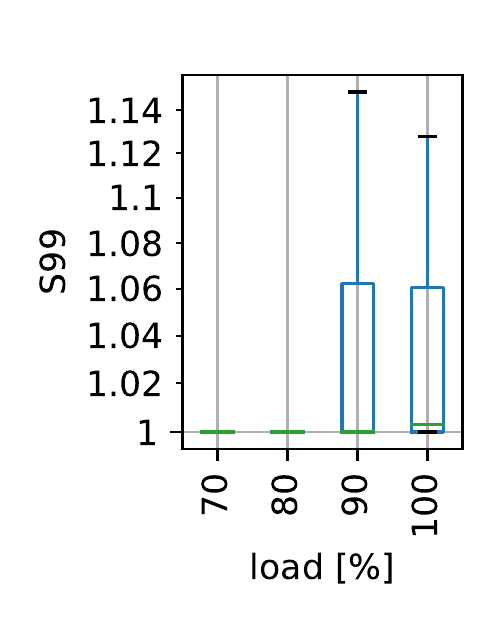}}}%
      \subfloat[SPT]{{\includegraphics[width=0.16\textwidth,trim={0 2mm 0 6mm},clip]{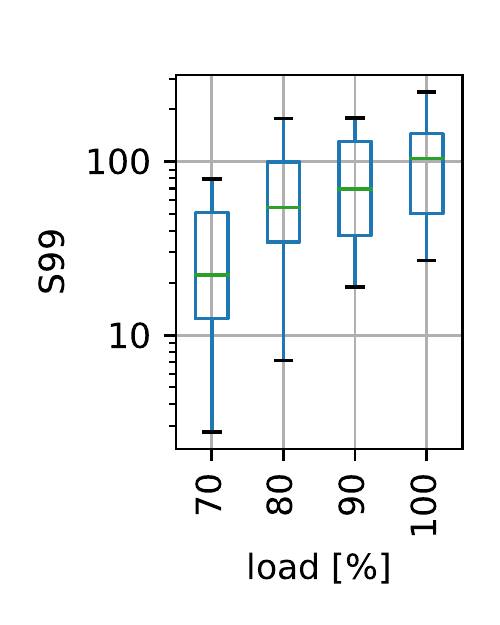}}}%
      \caption{99th percentile of stretch when varying the average load. 30-minute time frame, 20 processors.}
	\label{fig:cmp_load}
\end{figure*}

\begin{figure*}[tb!]
    \centering
      \subfloat[SEPT-RE, average flow]{{\includegraphics[width=0.25\textwidth,trim={0 2mm 0 5mm},clip]{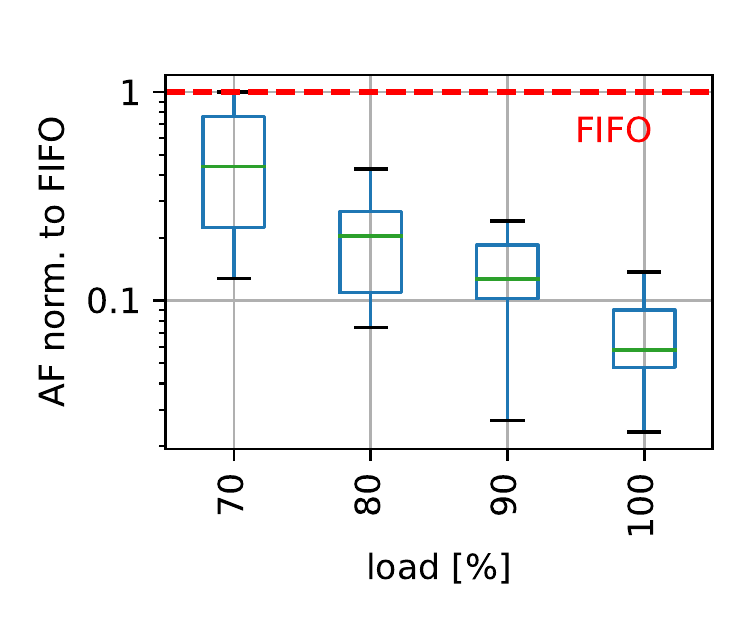}}}%
      \subfloat[SERPT-RE-LIM, average flow]{{\includegraphics[width=0.25\textwidth,trim={0 2mm 0 5mm},clip]{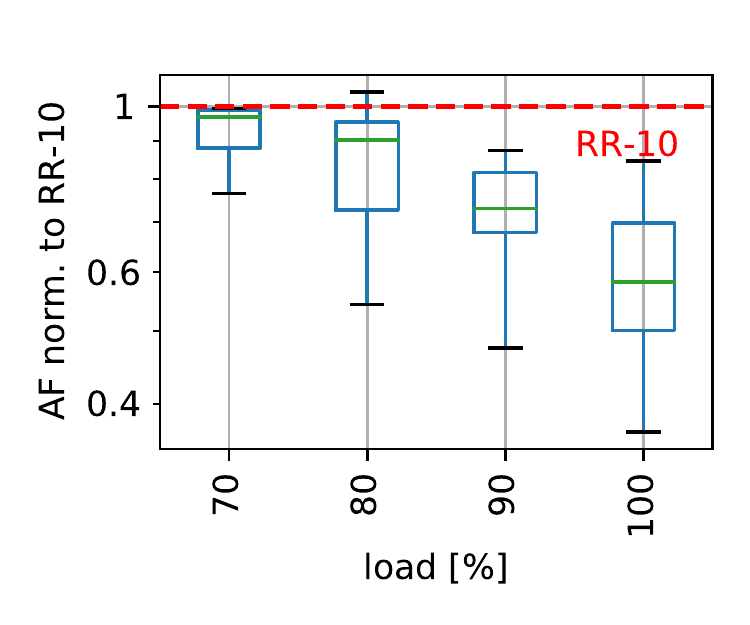}}}%
      \subfloat[SEPT-RE, average stretch]{{\includegraphics[width=0.25\textwidth,trim={0 2mm 0 5mm},clip]{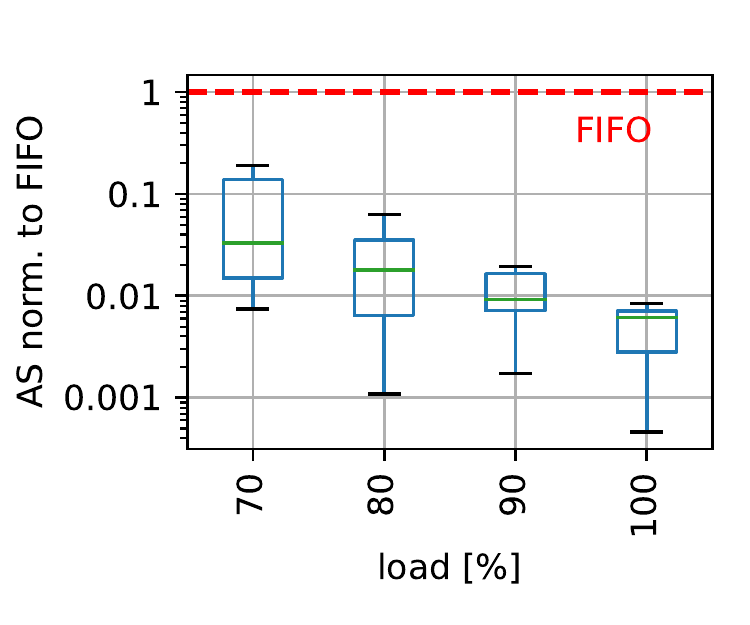}}}%
      \subfloat[SERPT-RE-LIM, average stretch]{{\includegraphics[width=0.25\textwidth,trim={0 2mm 0 5mm},clip]{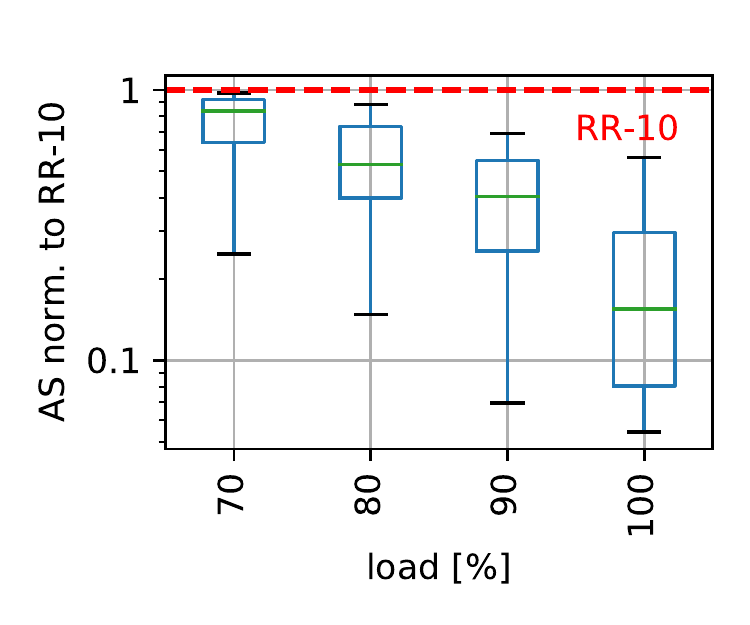}}}%
      \\
      \subfloat[FC\#-RE, FS]{{\includegraphics[width=0.25\textwidth,trim={0 2mm 0 5mm},clip]{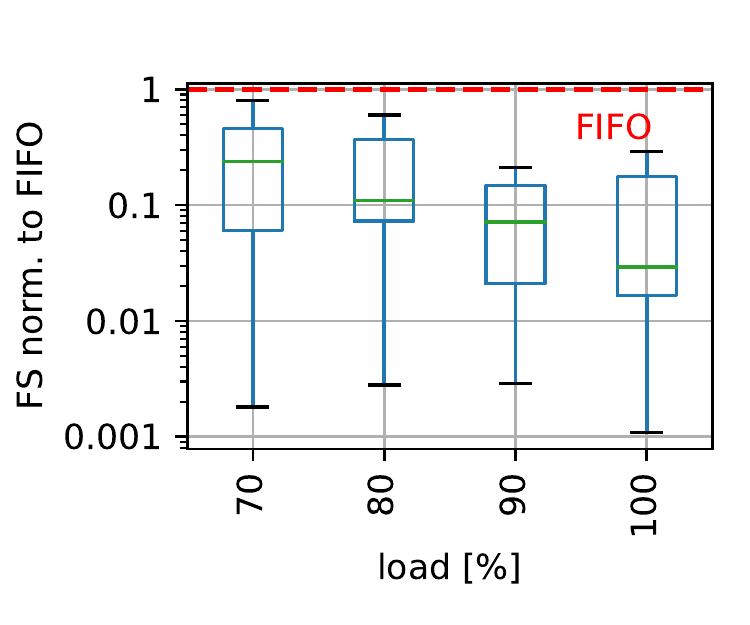}}}%
      \subfloat[FC\#-RE, FF]{{\includegraphics[width=0.25\textwidth,trim={0 2mm 0 5mm},clip]{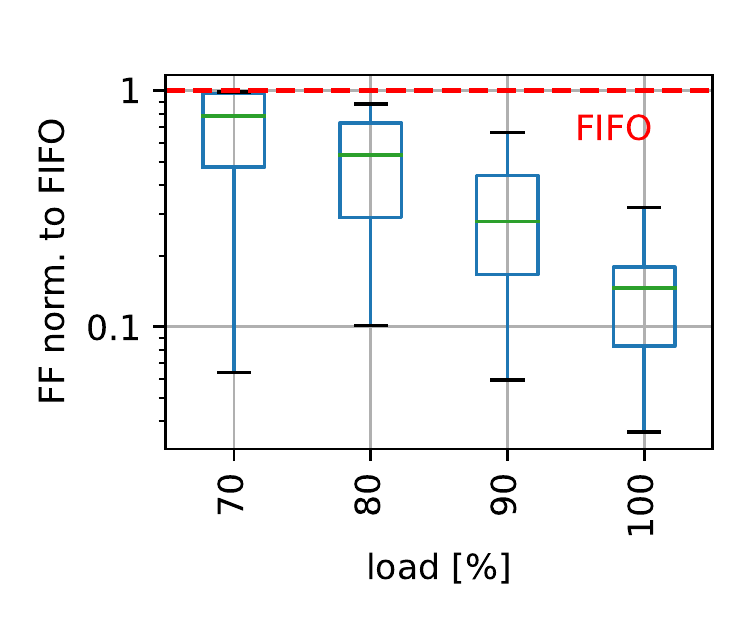}}}%
      \subfloat[FCP-RE, FS]{{\includegraphics[width=0.25\textwidth,trim={0 2mm 0 5mm},clip]{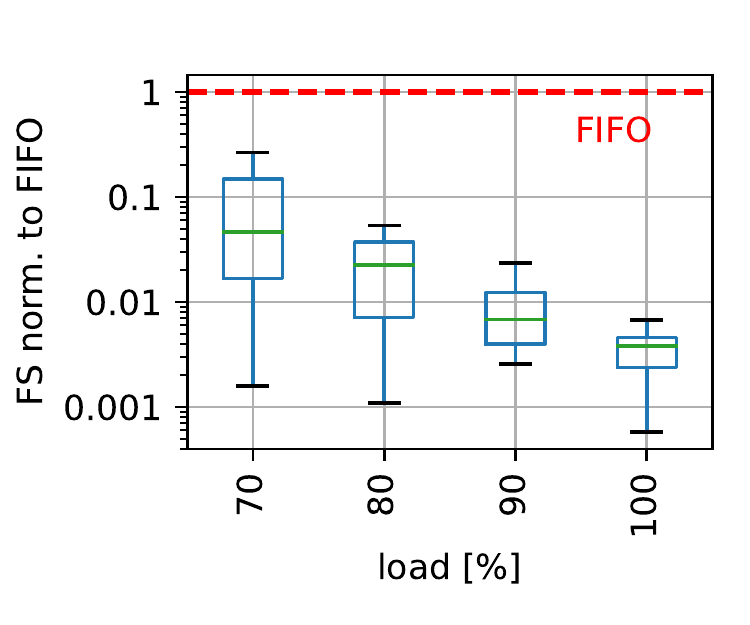}}}%
      \subfloat[FCP-RE, FF]{{\includegraphics[width=0.25\textwidth,trim={0 2mm 0 5mm},clip]{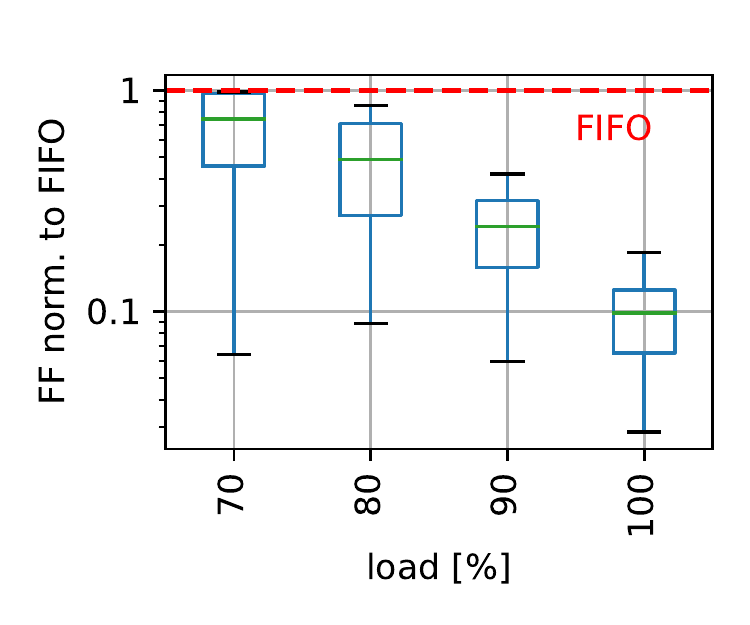}}}%
      \caption{Relative performance when varying the average load. 30 min time frame, 20 processors.}
	\label{fig:cmp_load_fc}
\end{figure*}

To make sure that the obtained results are valid for a wide range of scenarios, we verified the impact of changing average loads, processor counts and time window lengths.

Fig.~\ref{fig:cmp_load} presents the 99th percentile of stretch (S99) with different loads. We chose this metric as it is the most sensitive to the density of function calls.
First, for all loads SRPT results are close to the optimal 1, demonstrating that in all cases it is feasible to pack invocations almost optimally. Second, stretch increases with load for all other algorithms---however, both the increase and the absolute numbers are larger for the baselines FIFO and RR, compared with SERPT. Fig.~\ref{fig:cmp_load_fc} reinforces this observation: the higher the load, the better is the performance of our algorithms compared to the baselines.

\begin{figure*}[tb!]
    \centering
    \subfloat[SEPT-RE, average flow]{{\includegraphics[width=0.25\textwidth,trim={0 2mm 0 5mm},clip]{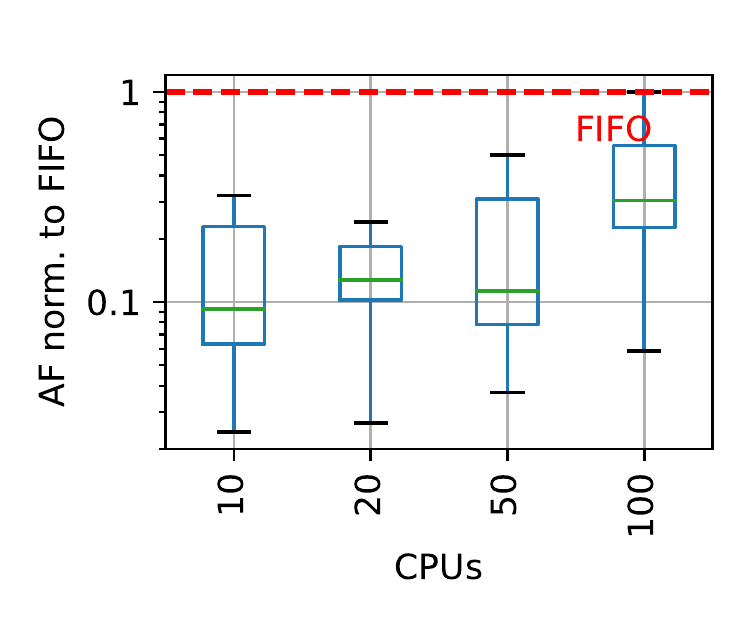}}}%
    \subfloat[SERPT-RE-LIM, average flow]{{\includegraphics[width=0.25\textwidth,trim={0 2mm 0 5mm},clip]{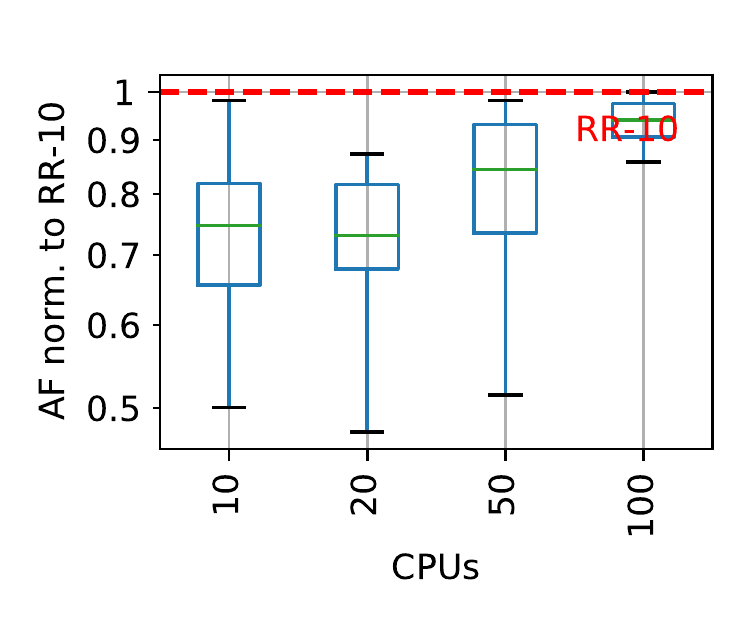}}}%
    \subfloat[SEPT-RE, average stretch]{{\includegraphics[width=0.25\textwidth,trim={0 2mm 0 5mm},clip]{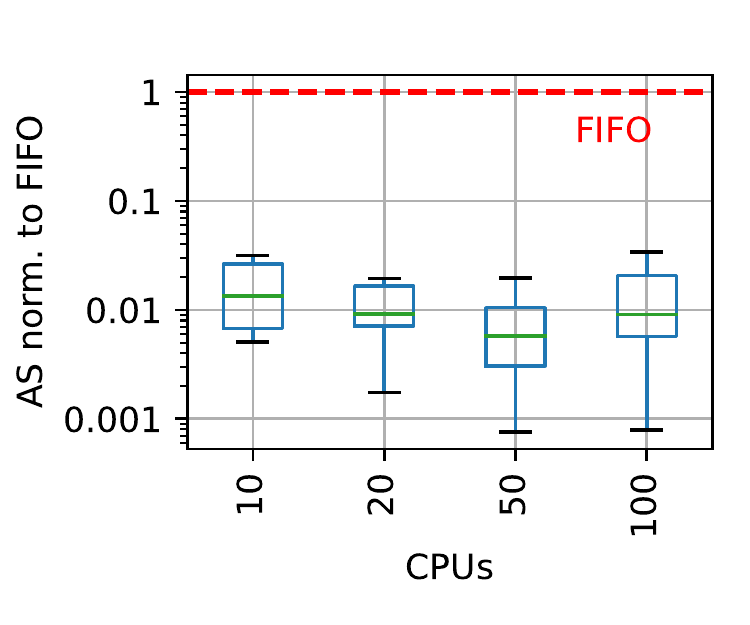}}}%
    \subfloat[SERPT-RE-LIM, average stretch]{{\includegraphics[width=0.25\textwidth,trim={0 2mm 0 5mm},clip]{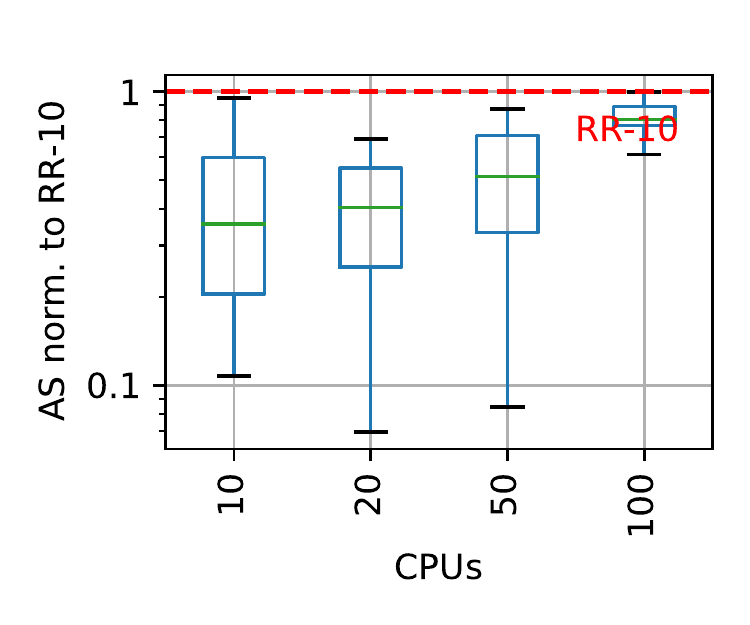}}}%
    \\
    \subfloat[SPT, average stretch]{{\includegraphics[width=0.25\textwidth,trim={0 2mm 0 5mm},clip]{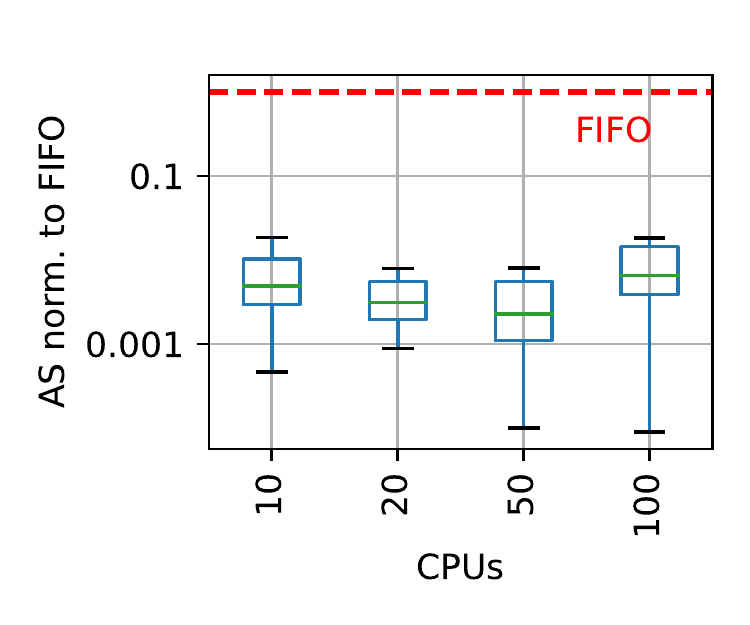}}}%
    \subfloat[SRPT, average stretch]{{\includegraphics[width=0.25\textwidth,trim={0 2mm 0 5mm},clip]{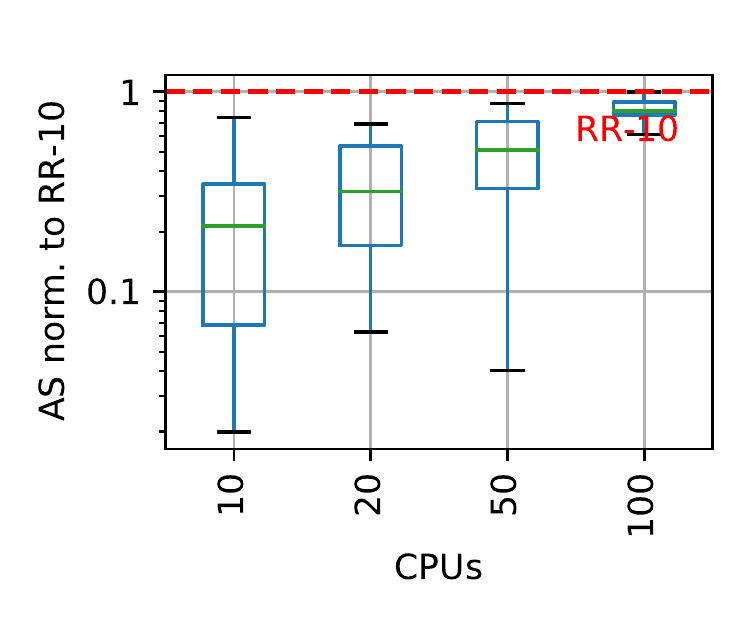}}}%
    \subfloat[FCP-RE + FS]{{\includegraphics[width=0.25\textwidth,trim={0 2mm 0 5mm},clip]{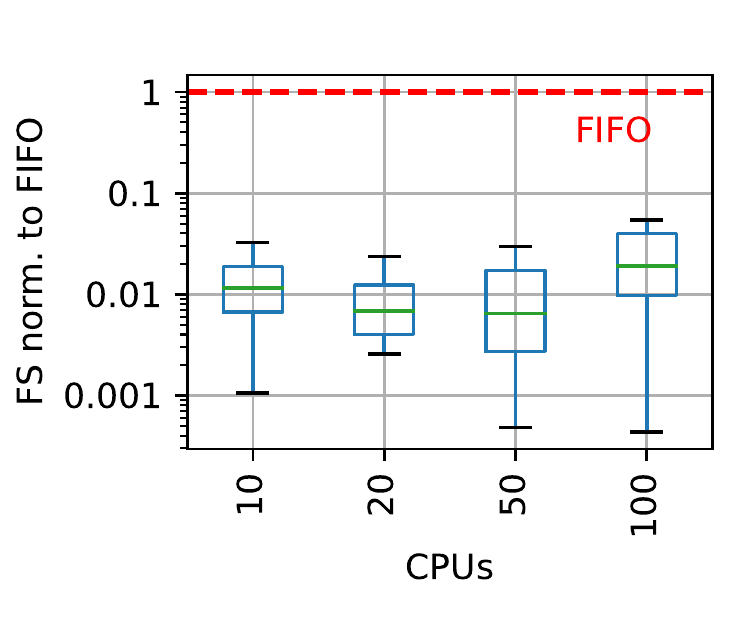}}}%
    \subfloat[FCP-RE + FF]{{\includegraphics[width=0.25\textwidth,trim={0 2mm 0 5mm},clip]{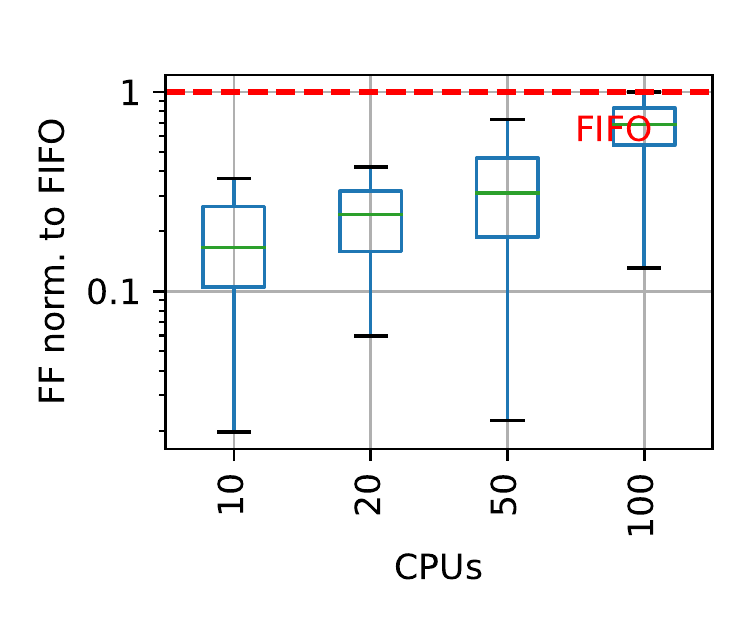}}}%
    \caption{Relative performance when varying the number of processors. 30-minute time frame and 90\% load.}
	\label{fig:cmp_cpu_count}
\end{figure*}

We also analyzed how results change when the number of processors changes, with up to 100 processors (as the largest C2 instance in AWS has 96).
Fig.~\ref{fig:cmp_cpu_count} shows that with the increase in the number of processors, it is easier to schedule invocations almost optimally even with simple heuristics, as it is less and less probable that all processors will be blocked on processing long invocations---thus, the impact of better scheduling methods diminishes. This is confirmed by smaller relative gains of the SRPT in the preemptive case (Fig.~\ref{fig:cmp_cpu_count}, (f)).

\begin{figure*}
    \centering
    \subfloat[FC\#-RE + FS]{{\includegraphics[width=0.25\textwidth,trim={0 2mm 0 5mm},clip]{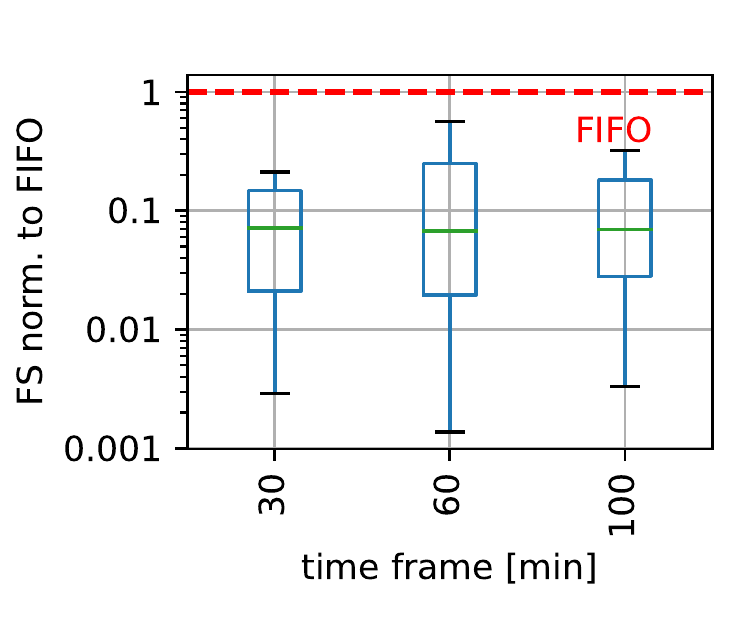}}}%
    \subfloat[FC\#-RE + FF]{{\includegraphics[width=0.25\textwidth,trim={0 2mm 0 5mm},clip]{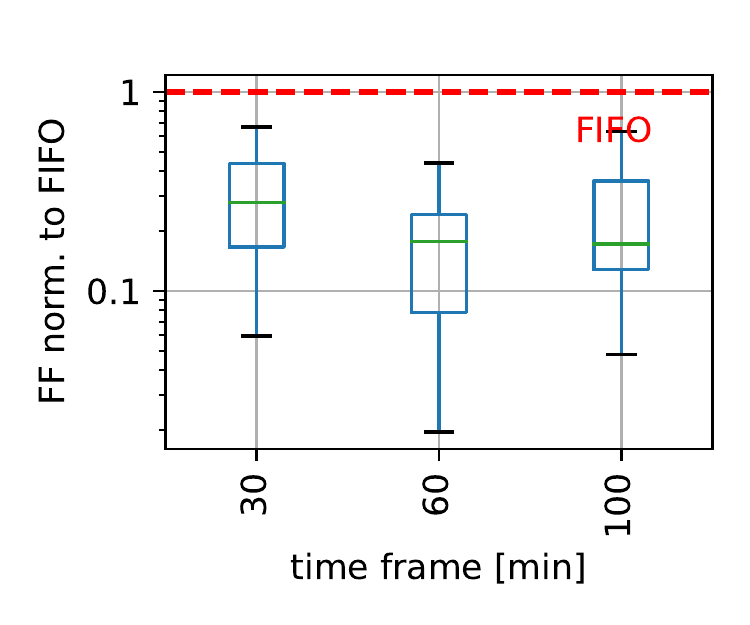}}}%
    \subfloat[SEPT-RE + AS]{{\includegraphics[width=0.25\textwidth,trim={0 2mm 0 5mm},clip]{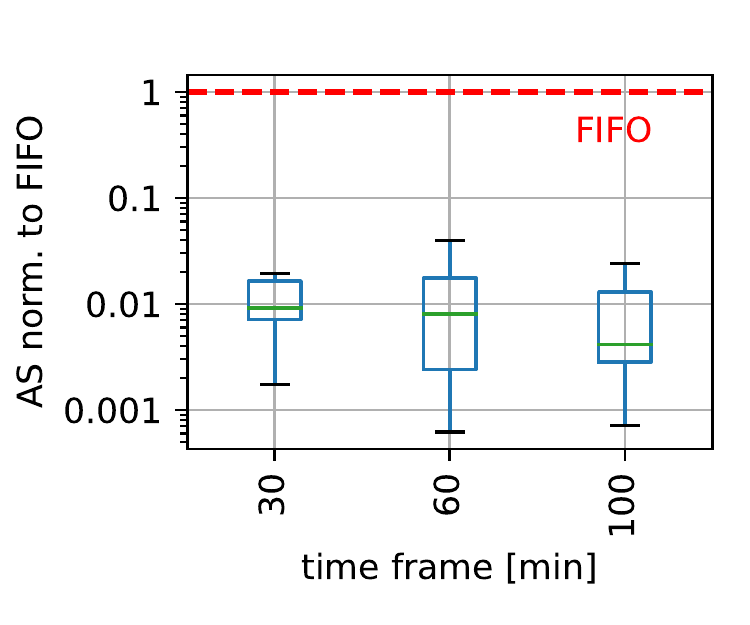}}}%
    \subfloat[SERPT-RE-LIM + AF]{{\includegraphics[width=0.25\textwidth,trim={0 2mm 0 5mm},clip]{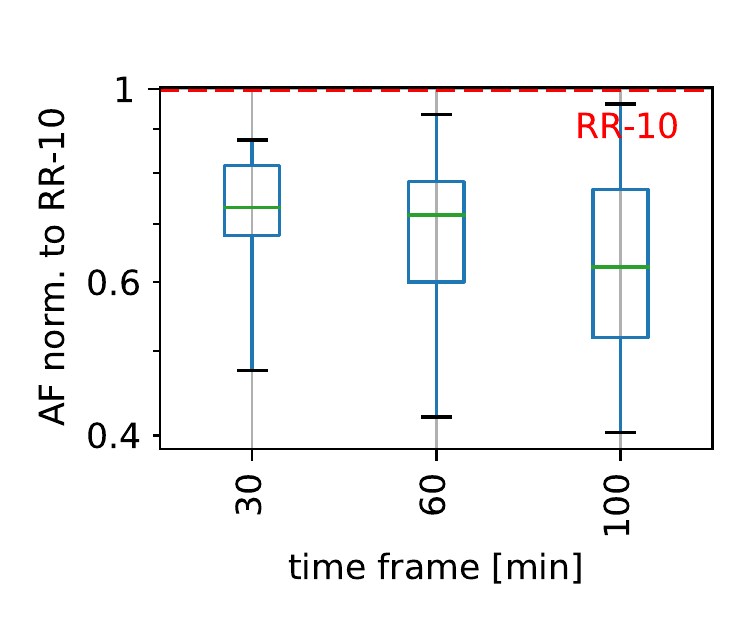}}}%
    \\
    \subfloat[FCP-RE + FS]{{\includegraphics[width=0.25\textwidth,trim={0 2mm 0 5mm},clip]{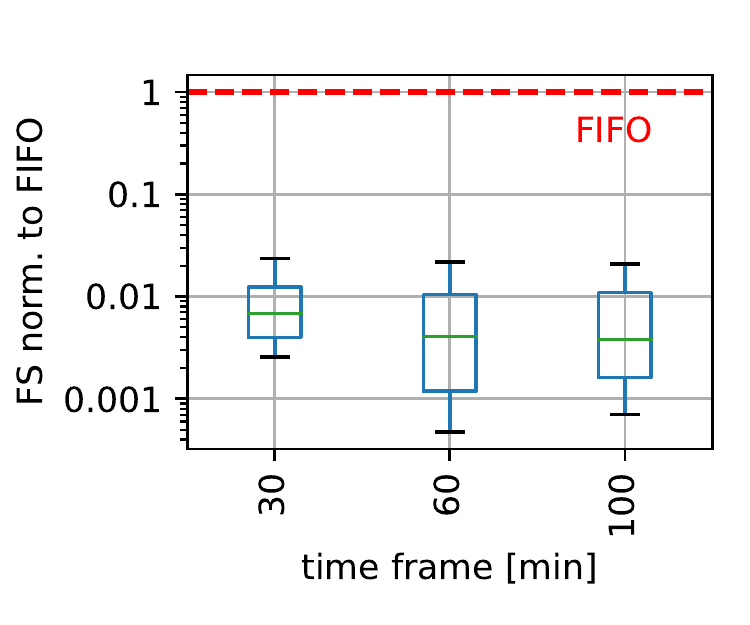}}}%
    \subfloat[FCP-RE + FF]{{\includegraphics[width=0.25\textwidth,trim={0 2mm 0 5mm},clip]{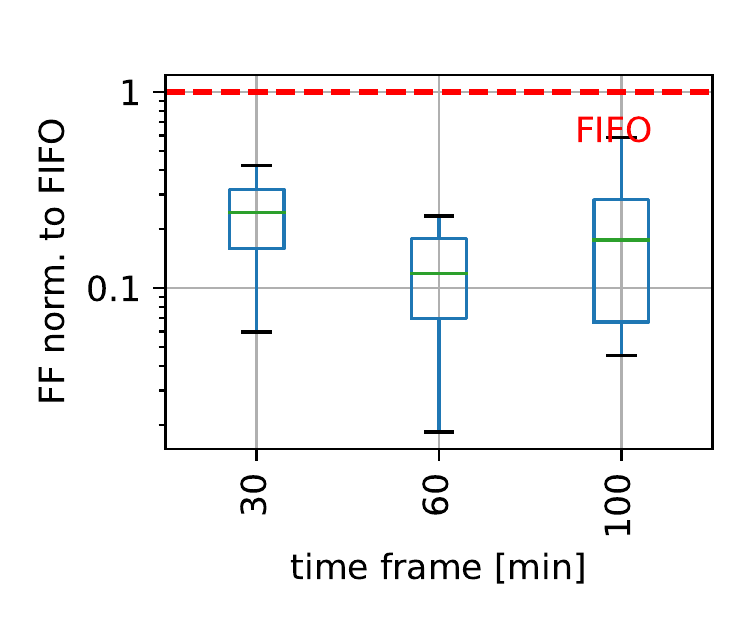}}}%
    \subfloat[SEPT-RE + AF]{{\includegraphics[width=0.25\textwidth,trim={0 2mm 0 5mm},clip]{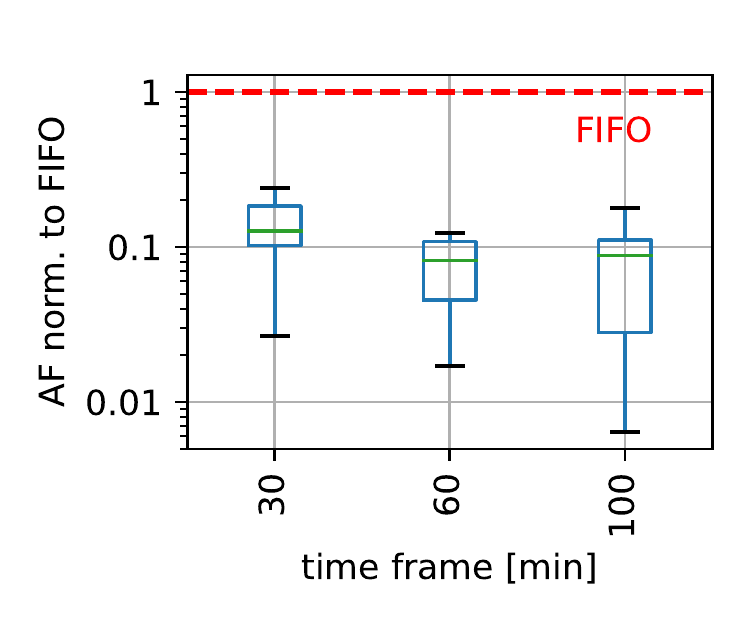}}}%
    \subfloat[SERPT-RE-LIM + AS]{{\includegraphics[width=0.25\textwidth,trim={0 2mm 0 5mm},clip]{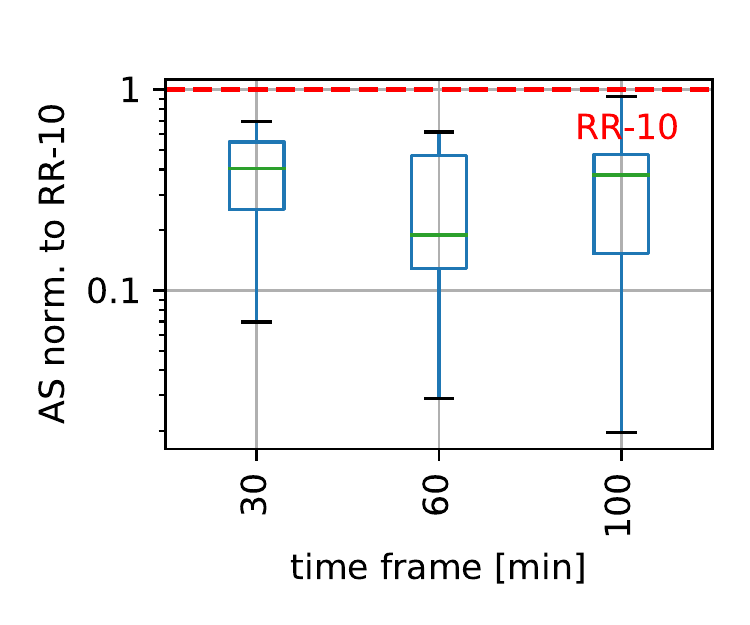}}}%
      
    \caption{Comparison of the algorithms with different time frame durations. 20 processors and 90\% load.}
	\label{fig:cmp_duration}
\end{figure*}

Finally, we verified whether 30-minute time frame is reasonable by providing results that can be extrapolated on larger time windows.
In Fig.~\ref{fig:cmp_duration} we present representative results on samples generated with different time frame durations.
All the results are comparable, which indicates that our algorithms provide similar results for longer time spans.

\section{Related Work}

\subsection{Serverless and FaaS scheduling}

As serverless model is flexible and there are various scenarios of its application, there exists a wide range of different aspects of this model that have to be addressed~\cite{Baldini2017,jonas2019cloud}.
An important issue connected with FaaS scheduling is the cold start---an additional overhead required to prepare execution environment at the time of the first invocation.
Particle~\cite{thomas2020particle} identify network provisioning as an important factor influencing startup time in platforms using containers.
To address this issue, proposed solution decouples creation of network from container creation process and uses pool of ephemeral IPs.

\cite{8752939} proposes decoupling common dependencies from function code---\emph{packages} (e.g. libraries)---cache them directly on worker nodes and schedule invocations taking into account package availability.
Such approach leads to the reduction of startup time as large libraries do not have to be downloaded multiple times.
Further analysis of performance challenges connected with FaaS is presented in~\cite{10.1145/3185768.3186308}.

Fifer~\cite{10.1145/3423211.3425683} considers execution of functions chains (i.e. invocation patterns where one function invokes a next one) and take into account the effects of cold starts.
Fifer implements its prediction model for incoming invocations using LSTM.

Massive parallel invocations of cloud functions can easily lead to an exhaustion of cluster resources.
In \cite{dukic2020photons}, inefficient memory usage (e.g. redundant data runtime, libraries) is addressed by introducing context sharing between multiple invocations.

\subsection{Scheduling with fixed processing times}

A number of theoretical research papers on scheduling with release dates was focused on fixed job execution times. As the release date of each job, $r_i$, is schedule-independent, the total flow-time objective, $\sum(C_i - r_i)$, becomes equivalent to the total completion time, \csum{}. It was shown that although the $\text{P}m||\csum$ problem is polynomially-solvable \cite{BrunoCoffman1974}, the $\text{P}m|\text{pmtn}, r_j|\csum$ \cite{Bellenguez2015,Baptiste2007} and even $1|r_i|\csum$ \cite{LenstraRinnooy-Kan1977} problems are strongly \np-Hard. For these general problems, some online algorithms were analyzed. The modified version of the delayed SPT strategy \cite{Hoogeveen1996}) was shown to provide a $2$-competitive ratio for the $\text{P}m|\text{on-line}, r_j|\csum$ problem \cite{Liu2009}. The same competitive ratio can be achieved for the $\text{P}m|\text{on-line}, \text{pmtn}, r_j|\cwsum$ problem \cite{Megow2004}, but only a $2.62$-competitive algorithm is known for the non-preemptive case \cite{Correa2005}. These results were extended in \cite{Leonardi2007} where it was proven that, for $P = p_{\text{max}}/p_{\text{min}}$ being the ratio of the maximum to the minimum job execution time, SRPT is a $O(\log(\min\{\frac nm, P\}))$-approximation offline algorithm for the $\text{P}m|\text{pmtn},r_j|\csum$ problem.

In case of the total stretch (\ssum{}) objective, the results are even less promising. It was shown that the SRPT strategy is $14$-competitive for the $\text{P}m|\text{on-line}, \text{pmtn}, r_j|\ssum$ problem \cite{Muthukrishnan1999}. In \cite{Bender2004}, it was shown that there exists a strategy with a constant-factor competitive ratio for a uniprocessor machine even is the processing times are known only to within a constant factor of accuracy. There was also shown a PTAS for the offline version of the $1|\text{pmtn},r_j|\ssum$ problem \cite{Bender2004}.

The maximum-defined objectives, such as $\max\{C_i - r_i\}$ (\fmax) and $\max\{(C_i - r_i)/p_i\}$ (\smax), were also discussed. A number of results were shown in \cite{Bender1998}. In particular, it was proven that for the $\text{P}m|\text{on-line}, r_i|\fmax$ problem, FIFO is a $(3 - 2/m)$-competitive strategy, and this bound is tight. It is also known that the offline version of the $1|r_i|\smax$ problem cannot be approximated in polynomial time to within a factor of $O(n^{1-\varepsilon})$, unless $\p=\np$ \cite{Bender1998}. Finally, it is shown that for $P = p_{\text{max}}/p_{\text{min}}$, every online algorithm for the $\text{P}m|\text{on-line}, \text{pmtn}, r_i|\smax$ problem is $\Omega(P^{1/3})$-competitive for three or more job sizes.

\subsection{Stochastic approaches to scheduling}

The results for fixed job processing times provide us lower bounds for the expected performance in case of the corresponding stochastic problems. In practice, stochastic problems are more complex. For example, it was shown that the performance guarantee for the $\text{P}m|\text{on-line}, r_j|\mathbb{E}[\cwsum]$ problem can be upper-bounded by $\frac{5+\sqrt{5}}{2}-1/(2m)$, if the expected remaining processing time of any job is a function decreasing in time \cite{Megow2005}, compared to a 2.62-competitive ratio in case of a non-stochastic variant \cite{Correa2005}. A good review of research papers considering stochastic scheduling problems, mostly non-preemptive, can be found in \cite{Megow2006,Vredeveld2011}. It can be observed that most results are related to the \cwsum{} objective.

To the best of our knowledge, there is only a limited number of papers on preemptive stochastic scheduling (e.g. \cite{Megow2006A,Megow2014}). However, performance guarantees are shown only for specific distributions of processing times (i.e. discrete ones). Moreover, we found no theoretical papers on stochastic scheduling with average or maximum stretch as performance metrics. 

\section{Conclusions}
This paper was driven by real-world data provided in the Azure Function Trace. We studied various non-clairvoyant, online scheduling strategies for a single node in a large FaaS cluster. Our aim was to improve performance measured with metrics related to response time or stretch of the function invocations. To estimate the values of the expected processing time or the expected remaining processing time of an invocation, we took advantage of the fact that the same function is usually invoked multiple times. This way, we were able to adapt SEPT and SERPT strategies with no significant increase in the consumption of memory or computational power. For our newly-introduced fair metrics, the function-aggregated stretch and flow time, we proposed two new heuristics, called Fair Choice. There, decisions are made based on an additional estimation of the expected number of function calls in the next monitoring interval.

Compared to round-robin and FIFO baselines, in the base case of our simulations, our proposed SEPT and SERPT strategies reduce the average flow time by a factor of 1.4 (preemptive) to 6 (non-preemptive); and the average stretch by a factor of 2.6 (preemptive) to 50 (non-preemptive). Gains over FIFO and round-robin increase with increased pressure of the workload on the system: with the lower number of processors and the higher average load. For the fair, function-aggregated metrics, our newly introduced Fair Choice strategies clearly outperform other implementable algorithms when measuring the flow time (while the gain is smaller for stretch).

SEPT, SERPT and Fair Choice can be easily implemented in the node-level component of the FaaS scheduling stack (e.g. the Invoker module in OpenWhisk). We expect the actual improvement compared to round-robin and FIFO baselines in practical cases, as our simulations were based on real-life data.

\section*{Acknowledgements}

This research is supported by a Polish National Science Center grant Opus (UMO-2017/25/B/ST6/00116).

\balance

\bibliographystyle{IEEEtran}
\bibliography{article}

\end{document}